\newcommand{\OurAlgo}{QOMIC }
\begin{document}
\title{QOMIC: Quantum optimization for motif identification}
%
%
\author{Hoang M. Ngo\inst{1} \and
Tamim Khatib\inst{1} \and
My T. Thai\inst{1} \and Tamer Kahveci\inst{1}}
\authorrunning{Hoang et al.}
%
\institute{University of Florida, Gainesville 32611, USA}
\maketitle              
\begin{abstract}
Network motif identification problem aims to find topological patterns in biological networks. Identifying non-overlapping motifs is a computationally challenging problem using classical computers. Quantum computers enable solving high complexity problems which do not scale using classical computers. In this paper, we develop the first quantum solution, called QOMIC (Quantum Optimization for Motif IdentifiCation), to the motif identification problem. QOMIC transforms the motif identification problem using a integer model, which serves as the foundation to develop our quantum solution. We develop and implement the quantum circuit to find motif locations in the given network using this model. Our experiments demonstrate that QOMIC outperforms the existing solutions developed for the classical computer, in term of motif counts. We also observe that QOMIC can efficiently find motifs in human regulatory networks associated with five neurodegenerative diseases: Alzheimers, Parkinsons, Huntingtons, Amyotrophic Lateral Sclerosis (ALS), and Motor Neurone Disease (MND).

\keywords{Quantum computing, motif identification, regulatory networks}
\end{abstract}
%
%
%

\section{Introduction}



Biological systems are represented as intricate networks of molecules such as genes, proteins, and metabolites interacting with each other~\cite{Zhu2007}. Uncovering potential properties of biological networks provide opportunities for gaining insights on fundamental principles that govern living organisms. In these networks, frequently recurring subgraphs are referred to as \emph{motifs}. Motifs serve as building blocks of large and complicated biological networks~\cite{Milo2002}. Studying motifs is significant as it can reveal functions of biological systems such as transcriptional regulation networks~\cite{Shen-Orr2002,Alon2007} or protein interaction networks~\cite{Wuchty2003,Ferre2015}.

Motif identification remains a computationally challenging problem as it involves solving the subgraph isomorphism, which is an NP-hard problem~\cite{Cook1971}. As the volume of biological data continues to grow rapidly, it will be more computationally expensive to explore motifs in biological networks. Enforcing specific regulatory constraints on interactions, such as activation or repression patterns on the motif topology, further increases the computation time needed to find motifs.

In literature, there are three well-known measures for counting motifs, namely $\mathcal{F}_1$, $\mathcal{F}_2$, and $\mathcal{F}_3$~\cite{Schreiber2005}. $\mathcal{F}_1$ measures counts every isomorphic subgraph of a target network to a given motif pattern with no restrictions~\cite{grochow2007network,kashtan2004efficient,omidi2009moda,wernicke2006efficient,chen2006nemofinder,kashani2009kavosh}. Although $\mathcal{F}_1$ provides a comprehensive view on all possible embeddings of motif patterns on networks, it fails to capture dependencies among motif embeddings. In addition, $\mathcal{F}_1$ does not satisfy the downward closure property~\cite{Schreiber2005}, which can lead to challenges on scaling the motif size. In contrast to $\mathcal{F}_1$, $\mathcal{F}_2$ and $\mathcal{F}_3$ measures impose restrictions on the motif embeddings. Specifically,  $\mathcal{F}_2$ does not allow resulting embeddings of given motifs in the network sharing the same edge (i.e., interaction). $\mathcal{F}_3$ restricts them from sharing the same node (i.e., molecule). Furthermore, $\mathcal{F}_2$ and $\mathcal{F}_3$ measures are downward closed~\cite{Elhesha2016}.

These distinct counting concepts provide various perspectives and trade-offs for the motif identification problem. For the $\mathcal{F}_2$ measure, the authors in~\cite{Patra2019} uses the idea of dynamic expansion trees to count motif embeddings. They also propose a new algorithm using basic building patterns to find embeddings, and then iteratively join the parent patterns with these basic building patterns, which they called pattern-join \cite{Patra2019_2}. In the work~\cite{Elhesha2016}, the authors introduced a motif-centric approach which constructs a set of basic building patterns and then explores embeddings corresponding to these patterns in term of $\mathcal{F}_2$, and $\mathcal{F}_3$. Then, in the work~\cite{ren2019}, the authors consider $\mathcal{F}_3$ for the motif identification problem in multi-layer networks. However, other than the topological constraints of $\mathcal{F}_2$ and $\mathcal{F}_3$, existing methods for the motif identification problem do not consider biological constraints within networks, such as the regulatory relations. These are additional constraints that go beyond the motif topology and enforce the type of interactions (such as activation or suppression) for each interaction in the motif. Due to the bottleneck on the current computational capacity, handling the motif identifications with multiple constraints is challenging. Therefore, there is an urgent need for a more powerful computing scheme to broaden the scope of this problem.

The field of quantum computing has drawn significant attention and investment recently because of its potential supremacy over classical computational methods~\cite{Harrow2017}. Specifically, quantum computing can address a variety of tasks which are intractable for classical computers~\cite{Shor1994,Farhi2019,Aaronson2011,Cerezo2021}. In the field of computational biology, quantum computing with its advantages shows great promise in solving complex computational biology tasks that require substantial computational resources, such as DNA alignment~\cite{Sarkar2021_a}, genome assembly~\cite{Charkiewicz2022,Boev2021}, and DNA sequence reconstruction~\cite{Sarkar2021} (see~\cite{Marchetti2022} for a short survey).

One of the most popular paradigms of quantum computing is the gate-based quantum model (a.k.a. the universal model)~\cite{Lloyd1996}. To handle combinatorial optimization problems, a quantum algorithm is introduced to work on the gate-based quantum model, named \emph{Quantum Approximate
Optimization Algorithm (QAOA)}~\cite{Farhi2014}. To use QAOA for solving a combinatorial optimization problem, several steps must be taken. First, we define an unconstrained objective function $f$ to quantify potential solutions for the combinatorial optimization problem. Then, we construct two quantum operators: a problem Hamiltonian to encode the predefined objective function $f$, and a mix Hamiltonian to expand the solution search space. Next, from two Hamiltonians, we design a parameterized quantum circuit including rotation quantum gates. This circuit operates on an initial quantum state prepared as a uniform superposition of all potential basis states. The parameters control the transformation of the initial state. In the final step, we iteratively optimize the set of parameters, using classical optimizers, such that the expectation of the state after transformation is minimized. Sampling this optimal state gives us the optimal solution to the given problem. 

In this work, we consider the motif identification problem, which finds the maximum set of motif embeddings in a target network such that these embeddings do not share any molecule (i.e., $\mathcal{F}_3$ measure) and all of these motif embeddings satisfy the regulatory constraints imposed by the given motif. We refer to our problem as the motif identification (MI) problem. We design a novel quantum solution for the MI problem, namely \emph{\OurAlgo (Quantum Optimization for Motif IdentifiCation)}. This is the first study solving this generalized MI problem using quantum computing. First, we model the MI problem as an optimization problem on the set of edges of the target network. Then, we propose an integer representation based on this model, followed by the unconstrained objective function for the MI problem. Finally, we introduce a quantum circuit design for solving the MI problem by QAOA. We implement \OurAlgo in the quantum gate-based machine provided by IBM. We compare \OurAlgo against the baseline method designed for the classical computer~\cite{ren2019} on 1500 synthetic networks and 4 motif patterns. Our results demonstrate that \OurAlgo outperforms the baseline method in terms of the motif count. Our results on real transcriptional regulatory networks for five neurodegenerative disorders suggest that \OurAlgo efficiently scales to large real networks.


\section{Preliminaries}
In this section, we first present basic concepts in quantum computing. Then, we explain the fundamental principles of QAOA, which is needed to understand our quantum computing solution to the MI problem.

\subsection{Basic concepts}
At the heart of quantum computing are \emph{quantum bits} (a.k.a. qubits), the quantum analogs of classical bits (0s and 1s). Unlike classical bits, which are either 0 or 1, a qubit can represent 0 and 1 simultaneously, exploiting the principles of quantum superposition. Information stored in a qubit is called the quantum state of that qubit, denoted by $|\psi\rangle$. Given two complex numbers $\alpha_0$ and $\alpha_1$, the quantum state of a qubit can be represented by a linear combination of two basis states $|0\rangle$ and $|1\rangle$ as: $$|\psi\rangle = \alpha_0 |0\rangle + \alpha_1 |1\rangle$$ Here, $\alpha_0$ and $\alpha_1$ represent the amplitudes associated with these basic states. 

The concept of entanglement allows us to combine multiple qubits into a quantum system, creating a quantum state that encompasses all possible combinations of the individual qubit states. For a system which includes $n$ qubits, the quantum states of the system are presented by $2^n$ basic states. For example, given four complex amplitudes $\alpha_{00}$, $\alpha_{01}$, $\alpha_{10}$, and $\alpha_{11}$, a quantum state in a 2-qubit system can be represented as: $$|\Psi\rangle = \alpha_{00} |00\rangle + \alpha_{01} |01\rangle + \alpha_{10} |10\rangle + \alpha_{11} |11\rangle$$

The quantum states of a quantum system can be transformed by \emph{quantum operators}. In the context of quantum gate-based model, these operators are referred to as \emph{quantum gates}. Common gates applied to single qubits include the Pauli-X, Y, Z gates which perform phase flips, and the Hadamard gate which creates superposition by transforming a $|0\rangle$ state into an equal superposition of $|0\rangle$ and $|1\rangle$. Quantum computing also involves gates applied to multiple qubits. One common two-qubit gate is the Controlled NOT (CNOT) gate, which is to flip the state of the second qubit if the first qubit is in the $|1\rangle$ state. Additionally, a series of quantum gates applied to qubits is called a \emph{quantum circuit}.

By a process called \emph{measurement}, we extract information from a quantum system. Measurement collapses the qubits' superposition state into a definite classical state. The outcome of a measurement is probabilistic, because it depends on the qubit's superposition amplitudes. For example, in a single qubit system with the state of $\psi$ as above, $|\alpha_0|^2$ and $|\alpha_1|^2$ are the probabilities of measuring that system as $0$ and $1$ respectively.

\subsection{Quantum Approximate Optimization Algorithm (QAOA)}

QAOA is a 4-step quantum computing paradigm designed to tackle combinatorial optimization problems.
\begin{itemize}
    \item \textbf{Step 1:} Given a combinatorial optimization problem with $n$ binary variables, we define an unconstrained objective function $f$ that quantifies the quality of the solution $\mathbf{S} \in \{0,1\}^n$ for that problem.

    \item \textbf{Step 2:} We construct two quantum operators: \emph{problem Hamiltonian}, denoted as $H_P$ and the \emph{mixing Hamiltonian}, denoted as $H_B$. Hamiltonian operators govern how the quantum state changes over time through the Schrödinger equation. Specifically, $H_P$ is used to encode the objective function of the problem. For any quantum basis state $|S\rangle$ corresponding to the solution $S \in \{0,1\}^n$, the problem Hamiltonian satisfies $H_P |S\rangle = f(S)|S\rangle$. On the other hand, the mixing Hamiltonian $H_B$ is used to perform state mixing, facilitating the exploration of solution space. Given $X_i$ as the Pauli-X gate that acts on the i\emph{th} qubit, $H_B$ can be written as $H_B = \sum_{i=1}^n X_i$.

    \item \textbf{Step 3:} Given an integer $p$, and $2p$ parameters $(\bm{\gamma}, \bm{\beta}) \equiv (\gamma_1,\dots,\gamma_p,\beta_1,\dots,\beta_p)$, along with an intial quantum state $|S_0\rangle$, we prepare a parameterized quantum circuit that transforms $|S_0\rangle$ by $2p$ operators in form of $e^{-i\gamma_jH_P}$ and $e^{-i\beta_jH_B}$ with $j \in [p]$. The final quantum state, obtained by this circuit, can be written as: $$|\bm{\gamma}, \bm{\beta}\rangle = e^{-i\beta_pH_B}e^{-i\gamma_pH_P}\dots e^{-i\beta_1H_B}e^{-i\gamma_1H_P}|S_0\rangle$$ $|\bm{\gamma}, \bm{\beta}\rangle$ is the distribution of all potential solutions for the problem, depending on parameters $(\bm{\gamma}, \bm{\beta})$.

    \item \textbf{Step 4:} We use a classic optimizers to find the optimal parameters $(\bm{\gamma^*}, \bm{\beta^*})$ such that the expectation $\langle\bm{\gamma^*}, \bm{\beta^*}|H_P|\bm{\gamma^*}, \bm{\beta^*}\rangle$ is minimum.
\end{itemize}

\section{Methods}
In this section, we first define the Motif Identification (MI) problem for biological networks (Section \ref{sec:problem definition}). We then describe the integer representation of this problem (Section~\ref{sec:boolean representation}). We present the construction of the final Hamiltonian and the design of the corresponding quantum circuit (Section~\ref{sec:quantum circuit}).
\subsection{Problem definition}\label{sec:problem definition}

Given a set of regulatory interactions among genes as $\mathcal{D}$ (e.g, $\mathcal{D} = \{"Activation", "Repression"\}$), we model a regulatory network as a connected graph $G = (V, E,\gamma)$ where $V$ is the set of nodes, $E$ is the set of edges and $\gamma: E \rightarrow \mathcal{D}$ is a mapping from an edge to its corresponding regulatory relationship. We define a motif pattern as a connected graph $M=(V', E', \gamma')$, where $V'$, $E'$ and $\gamma': E' \rightarrow \mathcal{D}$ represent the set of motif nodes, edges and the mapping from edges to regulatory relationships respectively. Given two graphs $G_1 = (V_1, E_1, \gamma_1)$ and $G_2 = (V_2, E_2, \gamma_2)$, we say that $G_1$ and $G_2$ are isomorphic if there exists a bijection (one-to-one and onto mapping) $g: V_1 \rightarrow V_2$ such that for every pair of nodes $u,v \in V_1$, we have the edge $(u,v) \in E_1$ if and only if the edge $(g(u),g(v)) \in E_2$ and the regulatory relationships between two edges are same (i.e. $\gamma_1((u,v)) = \gamma_2((g(u),g(v))$). We say that a subset of edges $\Lambda \subseteq E$, is an embedding of the motif pattern $M$ in $G$, if the induced subgraph $G[\Lambda]$ is isomorphic to $M$, denoted by $G[\Lambda] \equiv M$. We consider two embeddings $\Lambda_1$ and $\Lambda_2 \subseteq E$ to be non-overlapping if the induced subgraphs $G[\Lambda_1]$ and $G[\Lambda_2]$ do not share any nodes.

For a given set of embeddings $\mathcal{W} = \{\Lambda|\Lambda \subseteq E, G[\Lambda] \equiv M\}$, we introduce the function $\phi$, defined as $\phi(\mathcal{W}) = \cup_{\Lambda \in \mathcal{W}} \Lambda$. We refer to $\phi(\mathcal{W})$ as an \emph{edge decomposition} of $\mathcal{W}$. Additionally, we characterize a set of embeddings $\mathcal{W}$ as non-overlapping if, for all embedding $\Lambda_i$ and $\Lambda_j \in \mathcal{W}$, $\Lambda_i$ and $\Lambda_j$ are non-overlapping. From this, we formally define of the MI problem as follows:

\begin{definition}
    \textbf{(MI problem)} Consider a network $G = (V,E,\gamma)$ and a motif pattern $M = (V', E',\gamma')$. The MI problem aims to find the largest set of non-overlapping embeddings of $M$ into $G$.
    \label{definition:formal}
\end{definition}

In Definition~\ref{definition:formal}, there is no specific linkage between the given elements (the network $G$ and motif pattern $M$) and the task at hand (identifying the largest set of non-overlapping embeddings). Therefore, we examine the "non-overlapping" characteristic in term of given inputs through Lemmas~\ref{theorem:def:unique} and~\ref{theorem:def:properties}. We provide proofs of all lemmas and theorems in the Supplementary Materials.

\begin{lemma}
Consider a network $G = (V,E,\gamma)$ and a motif pattern $M = (V', E', \gamma')$. Given two sets of non-overlapping embeddings $\mathcal{W}_1$ and $\mathcal{W}_2$ of $M$ into $G$ such that $\mathcal{W}_1 \neq \mathcal{W}_2$, then $\phi(\mathcal{W}_1) \neq \phi(\mathcal{W}_2)$.
\label{theorem:def:unique}
\end{lemma}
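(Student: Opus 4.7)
\medskip

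\noindent\textbf{Proof proposal.} The plan is to prove the contrapositive: assume $\phi(\mathcal{W}_1)=\phi(\mathcal{W}_2)$ and deduce $\mathcal{W}_1=\mathcal{W}_2$. The intuition is that because each $\mathcal{W}_i$ is non-overlapping (vertex-disjoint embeddings), the common edge set $\phi(\mathcal{W}_i)$ decomposes uniquely into its embeddings through the connected components of the induced subgraph, and hence both $\mathcal{W}_1$ and $\mathcal{W}_2$ must recover exactly the same decomposition.

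First I would record the following simple consequence of the non-overlapping condition: if $\Lambda,\Lambda'\in\mathcal{W}_i$ are distinct, then $G[\Lambda]$ and $G[\Lambda']$ share no node and therefore no edge. This means that, inside $G[\phi(\mathcal{W}_i)]$, every edge lies in exactly one embedding of $\mathcal{W}_i$, and the embeddings of $\mathcal{W}_i$ sit in pairwise disjoint connected components of $G[\phi(\mathcal{W}_i)]$. Moreover, since $M$ is connected (as stipulated in Section~\ref{sec:problem definition}) and $G[\Lambda]\equiv M$ for every $\Lambda\in\mathcal{W}_i$, each $G[\Lambda]$ is itself connected, so $\{G[\Lambda]:\Lambda\in\mathcal{W}_i\}$ are precisely the connected components of $G[\phi(\mathcal{W}_i)]$.

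Next, pick any embedding $\Lambda\in\mathcal{W}_1$ and any edge $e\in\Lambda\subseteq\phi(\mathcal{W}_1)=\phi(\mathcal{W}_2)$. Then $e$ lies in some unique $\Lambda'\in\mathcal{W}_2$. I would then walk along $G[\Lambda]$ using its connectivity: for any other edge $e''\in\Lambda$, there is a path in $G[\Lambda]$ from $e$ to $e''$, and at each step the next edge shares a node with the current one. If the current edge belongs to $\Lambda'$ in $\mathcal{W}_2$, then the next edge belongs to an embedding $\Lambda''\in\mathcal{W}_2$ that shares a node with $\Lambda'$, and the non-overlapping property of $\mathcal{W}_2$ forces $\Lambda''=\Lambda'$. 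Iterating along the path gives $\Lambda\subseteq\Lambda'$; since $|\Lambda|=|E'|=|\Lambda'|$, we conclude $\Lambda=\Lambda'\in\mathcal{W}_2$. This shows $\mathcal{W}_1\subseteq\mathcal{W}_2$, and by symmetry $\mathcal{W}_2\subseteq\mathcal{W}_1$, so $\mathcal{W}_1=\mathcal{W}_2$.

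The only delicate step is the connectivity argument linking edges of $\Lambda$ to a single embedding $\Lambda'$ of $\mathcal{W}_2$; everything else is essentially bookkeeping. The crucial ingredients it rests on are (i) $M$ is connected, which is given, and (ii) the non-overlapping condition forbids two embeddings of $\mathcal{W}_2$ from sharing even a single node. If either assumption were weakened the statement could fail, so I would make sure the write-up invokes both explicitly.
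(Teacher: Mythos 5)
Your proposal is correct and is essentially the paper's own argument in contrapositive form: both proofs rest on the connectivity of $M$ (so the edges of an embedding $\Lambda\in\mathcal{W}_1$ cannot be split across two vertex-disjoint embeddings of $\mathcal{W}_2$) together with the non-overlapping property of $\mathcal{W}_2$. If anything, your explicit edge-walk and the cardinality step $|\Lambda|=|E'|=|\Lambda'|$ make precise a point the paper's contradiction argument leaves implicit, namely why the edges of $\Lambda$ would otherwise have to lie in at least two distinct embeddings of $\mathcal{W}_2$.
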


According to Lemma~\ref{theorem:def:unique}, we deduce that, given a set of non-overlapping embeddings $\mathcal{W}$, the edge decomposition $\phi(\mathcal{W})$ is unique. Conversely, if we are given a set of edges $\mathcal{E}$ and are aware that $\mathcal{E}$ constitutes an edge decomposition of a non-overlapping embedding set $\mathcal{W}$, we are able to reconstruct $\mathcal{W}$. In Lemma~\ref{theorem:def:properties}, we establish properties that characterize a set of edges $\mathcal{E}$ as an edge decomposition. 

\begin{lemma}
Consider a network $G = (V,E,\gamma)$ and a motif pattern $M = (V', E',\gamma')$. Given an arbitrary edge set $\mathcal{E} = \{e|e \in E\}$, we show that $\mathcal{E}$ is a unique edge decomposition of a non-overlapping embedding set $\mathcal{W}$ of $M$ in $G$ if it has properties as follows:
\begin{itemize}
    \item \textbf{Property 1:} For every $e \in \mathcal{E}$, there exist a set of $|E'|-1$ distinct edges $S_e = \{\bar{e}_1, \dots, \bar{e}_{|E'|-1} \in \mathcal{E}\}$ such that $G[\{e\} \cup S_e] \equiv M$.

    \item \textbf{Property 2:} For every $e_1, e_2 \in \mathcal{E}$ such that $e_1$ and $e_2$ share a same node, then $e_1 \in S_{e_2}$ and $e_2 \in S_{e_1}$.
\end{itemize}
\label{theorem:def:properties}
\end{lemma}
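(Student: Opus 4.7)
\noindent\textbf{Proof proposal for Lemma~\ref{theorem:def:properties}.}
The plan is to explicitly construct a non-overlapping embedding set $\mathcal{W}$ with $\phi(\mathcal{W}) = \mathcal{E}$. Using Property~1, for each $e \in \mathcal{E}$ I fix a choice $S_e \subseteq \mathcal{E}$ of size $|E'|-1$ so that $\Lambda_e := \{e\}\cup S_e$ induces a subgraph isomorphic to $M$, and set $\mathcal{W} = \{\Lambda_e : e \in \mathcal{E}\}$. Every element of $\mathcal{W}$ is an embedding of $M$ by Property~1, and $\phi(\mathcal{W}) = \mathcal{E}$ is immediate from $e \in \Lambda_e \subseteq \mathcal{E}$ for every $e$. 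Uniqueness of the resulting $\mathcal{W}$ will then follow from Lemma~\ref{theorem:def:unique}, so the only substantive task is to show that the embeddings in $\mathcal{W}$ are pairwise non-overlapping.

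The main intermediate statement I would prove first is a \emph{localization claim}: for every $e \in \mathcal{E}$ and every $a \in \Lambda_e$, one has $\Lambda_a = \Lambda_e$. In words, each edge is consistently assigned to a single embedding, so $\mathcal{W}$ partitions $\mathcal{E}$ into embeddings. I plan to prove it by induction on the distance from $a$ to $e$ inside the connected graph $G[\Lambda_e]$. The base case $a = e$ is trivial. For the inductive step, pick an edge $a'$ in $\Lambda_e$ adjacent to $a$ that lies closer to $e$; by the inductive hypothesis $\Lambda_{a'} = \Lambda_e$. Since $a, a' \in \mathcal{E}$ share a node in $G$, Property~2 yields $a \in S_{a'}$ and $a' \in S_a$. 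A second direct consequence of Property~2---that for any edge $f \in \mathcal{E}$ and any endpoint $u$ of $f$, every edge of $\mathcal{E}$ incident to $u$ must lie in $\Lambda_f$---then pins down enough common local structure at the endpoints of $a$ to force the size-$|E'|$ set $\Lambda_a$ to coincide with $\Lambda_{a'} = \Lambda_e$.

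Once the localization claim is available, non-overlapping is quick. Suppose $\Lambda_{e_1}, \Lambda_{e_2} \in \mathcal{W}$ share a node $v$, and pick edges $a_i \in \Lambda_{e_i}$ incident to $v$. If $a_1 = a_2$, then $a_1 \in \Lambda_{e_1} \cap \Lambda_{e_2}$ and applying localization twice gives $\Lambda_{e_1} = \Lambda_{a_1} = \Lambda_{e_2}$. Otherwise $a_1$ and $a_2$ are distinct elements of $\mathcal{E}$ sharing $v$, so Property~2 gives $a_1 \in S_{a_2} \subseteq \Lambda_{a_2} = \Lambda_{e_2}$ (using localization on the right-hand side), which again produces an edge common to both embeddings and hence $\Lambda_{e_1} = \Lambda_{e_2}$.

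The hard part is the localization claim. Property~2 only constrains pairs of $\mathcal{E}$-edges that share a node, whereas $\Lambda_a = \Lambda_e$ is a global statement about the whole embedding. The challenge is to combine the local adjacency constraints coming from Property~2 with the rigid global structural constraint coming from Property~1 (that $\Lambda_a$ is a specific isomorph of $M$ of size $|E'|$) in order to force the two embeddings of $M$ containing both $a$ and $a'$ to coincide. Once that propagation is handled cleanly, the rest of the argument is bookkeeping.
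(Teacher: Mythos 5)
Your overall architecture (define $\Lambda_e=\{e\}\cup S_e$, prove a localization claim $\Lambda_a=\Lambda_e$ for all $a\in\Lambda_e$, then derive pairwise non-overlap from Property~2) is sound and close in spirit to the paper's proof, which greedily extracts groups $\Lambda_1,\dots,\Lambda_m$ and argues they are pairwise disjoint. But the step you explicitly defer --- the inductive proof of the localization claim --- is a genuine gap, and the route you sketch cannot be completed. The facts you propose to use in the inductive step (adjacent edges of $\mathcal{E}$ lie in each other's $S$-sets, and all $\mathcal{E}$-edges incident to an endpoint of $f$ lie in $\Lambda_f$) are strictly weaker than $\Lambda_a=\Lambda_{a'}$. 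Concretely: let $M$ be a path with three edges and let $\mathcal{E}$ be the six edges $e_1,\dots,e_6$ of a $6$-cycle $v_1v_2\cdots v_6v_1$ in $G$ (all carrying the same regulatory label), with $S_{e_i}=\{e_{i-1},e_{i+1}\}$ (indices mod $6$). Every $\Lambda_{e_i}$ induces a copy of $M$, Property~2 holds, and both of your local consequences hold, yet $\Lambda_{e_1}=\{e_6,e_1,e_2\}\neq\{e_1,e_2,e_3\}=\Lambda_{e_2}$, and $\mathcal{E}$ admits no decomposition into vertex-disjoint copies of $M$ (two disjoint $4$-vertex paths would need $8$ vertices). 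So no argument using only those local facts can force $\Lambda_a=\Lambda_{a'}$; indeed, under the literal ``there exists $S_e$'' reading of Property~1 the lemma itself fails on this example.

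The missing ingredient is \emph{uniqueness} of $S_e$, which the paper's proof silently reads into Property~1 (and which is what Constraint~(\ref{constraint:edge limit}) actually enforces: a selected edge lies in exactly one selected embedding). Once $S_e$ is unique, your localization claim is immediate and needs neither induction nor Property~2: if $a\in\Lambda_e$, then $\Lambda_e\setminus\{a\}$ is a set of $|E'|-1$ distinct edges of $\mathcal{E}$ with $G[\{a\}\cup(\Lambda_e\setminus\{a\})]\equiv M$, so uniqueness gives $S_a=\Lambda_e\setminus\{a\}$ and hence $\Lambda_a=\Lambda_e$. Your closing argument (two groups sharing a node must, by Property~2, share an edge and therefore coincide) then goes through verbatim, and in fact yields a cleaner write-up than the paper's greedy construction. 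You should either restate Property~1 with ``exactly one such set $S_e$'' or explicitly flag that your proof, like the paper's, depends on that reading.
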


From these two lemmas, we rewrite the MI problem's definition, which is equivalent to that in Definition~\ref{definition:formal}, but is better aligned to the quantum solution we will develop in the rest of this section:

\begin{definition}
    \textbf{(alternative)} Consider a network $G = (V,E,\gamma)$ and a motif pattern $M = (V', E',\gamma')$. The MI problem aims to find the largest set of edges $\mathcal{E} = \{e_i|e_i \in E\}$ such that $\mathcal{E}$ satisfies the two properties in Lemma~\ref{theorem:def:properties}.
    \label{definition:alternative}
\end{definition}

\subsection{Integer representation for the MI problem}\label{sec:boolean representation}

We model two regulatory relationships consisting of activation and repression, which we label as $0$ and $1$, respectively. Thus, the set of regulatory relationships $\mathcal{D}$ is $\{0,1\}$. Given network $G = (V, E, \gamma)$ and motif $M = (V', E', \gamma')$, we use the term $(i,j)$ with $i,j \in V$ and $(i',j')$ with $i',j' \in V'$ to denote an edge in $G$ and $M$ respectively. Additionally, given two nodes $i, j \in V$, $\gamma(i,j) = 0$ if $(i,j) \in E$ and the relationship is activation, while $\gamma(i,j) = 1$ if $(i,j) \in E$ and the relationship is repression. In case $(i,j) \notin E$, given a large constant $\Omega$, $\gamma(i,j) = \Omega$. The same rules are applied to $\gamma'$. Given two nodes $i,j \in V$ and two nodes $i', j' \in V'$, we define $c_{i,j,i',j'} = 1 - \left|\gamma(i, j) - \gamma'(i', j')\right|$. We notice that $c_{i,j,i',j'} = 1$ if $(i,j)$ and $(i',j')$ have the same regulatory relationship, or $(i,j) \notin E$ and $(i',j') \notin E'$.

We model selected edges in the solution with a set of binary variables $\mathbf{X} = \{x_{i,j}|i,j \in V\}$. More specifically, we denote edge $(i,j)$ with $x_{i,j}$ as:
\begin{equation*} \label{eq:indicator:function}
    x_{i,j} = \begin{cases}
     0 & \text{if $(i,j)$ is not selected.} \\
     1 & \text{if $(i,j)$ is selected.}
\end{cases}
\end{equation*}

Given two nodes $i,j \in V$, for $n = |V|$, we represent a permutation of indices $1,2,\dots,n$ denoting the $n$ nodes in $V$ as $[\pi_1, \pi_2, \pi_3, \dots, \pi_n]$, where $\pi_1 = i$ and $\pi_2 = j$. We define the set of all possible such permutations as $\mathcal{M}^{(n)}_{V,i,j}$.

Let us denote $n' = |V'|$. Without loss of generality, we assume that nodes in the motif $M$ are labeled from $1$ to $n'$ and there always exists an edge between node $1$ and $2$. We realize that given $(i,j) \in E$, each permutation $[\pi_1, \pi_2, \pi_3, \dots, \pi_{n'}] \in \mathcal{M}^{(n')}_{V,i,j}$ corresponds to a distinct edge set $S = \{(\pi_{i'}, \pi_{j'})| (i',j') \in E'\}$. Thus, $$\prod_{(i',j')\in E'} x_{\pi_{i'},\pi_{j'}}c_{\pi_{i'}, \pi_{j'},i',j'} = 1$$ if all edges in the set $S$ are selected and $G[S] \equiv M$ (i.e., the subgraph of $G$ induced on $S$ is isomorphic to the motif $M$). As a result, given the edge $(i,j) \in E$, the number of embeddings in $G$ including the edge $(i,j)$ is equal to the sum $$h_{V,i,j} = \sum_{[\pi_1,\dots,\pi_{n'}] \in \ \mathcal{M}^{(n')}_{V,i,j}} \prod_{(i',j')\in E'} x_{\pi_{i'},\pi_{j'}}c_{\pi_{i'}, \pi_{j'},i',j'}$$

\noindent Thus, in accordance with Definition~\ref{definition:alternative}, we formulate the MI problem as a constrained integer model as:

\textit{Maximize:}
    \[
        \sum_{(i,j) \in E} x_{i,j}
    \]

    \textit{Subject to:}
    \begin{align}
    x_{i,j} - h_{V,i,j}&= 0 &\:\:\:\:\forall (i,j) \in E \label{constraint:edge limit}\\ 
    x_{i,j} x_{k,t}(h_{V \setminus \{k,t\},i,j}+h_{V \setminus \{i,j\},k,t}) & = 0 &\:\:\:\:\forall (i,j),(k,t) \in E, |\{i,j\} \cap \{k,t\}| \geq 1 \label{constraint:node limit}\\ 
    x_{i,j} &= 0& \:\:\:\:\forall (i,j) \notin E \label{constraint:no edge}
\end{align}
The formulation above maximizes the number of selected edges such that these edges can form a set of non-overlapping embeddings. These three constraints follow the properties of non-overlapping embedding sets, as established in Lemma~\ref{theorem:def:properties}. Constraint~(\ref{constraint:edge limit}) ensures that for each selected edge $(i,j) \in E$, there exists exactly one distinct selected edge set $S_{i,j}$ such that $G[S_{i,j}] \equiv M$. This constraint corresponds to the first property in Lemma~\ref{theorem:def:properties}. Constraint~(\ref{constraint:node limit}) ensures that for every pair of selected edges $(i,j), (k,t) \in E$ which share at least one common node, there is no motif constructed by sequences of $\mathcal{M}^{(n')}_{V \setminus \{k,t\},i,j}$ and $\mathcal{M}^{(n')}_{V \setminus \{i,j\},k,t}$. This constraint corresponds to the second property in Lemma~\ref{theorem:def:properties}. Finally, Constraint~(\ref{constraint:no edge}) assigns $x_{i,j} = 0$ if the given network $G$ does not contain the edge $(i,j)$.

\begin{theorem}
    An assignment of values to the variables in $\mathbf{X}$ which maximizes the number of edges and satisfies three constraints~(\ref{constraint:edge limit}),~(\ref{constraint:node limit}) and~(\ref{constraint:no edge}) in our integer model yields the optimal solution to the MI problem.
    \label{theorem:IP correctness}
\end{theorem}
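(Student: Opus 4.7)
The plan is to reduce Theorem~\ref{theorem:IP correctness} to Lemma~\ref{theorem:def:properties} and Definition~\ref{definition:alternative}. Specifically, I would establish a bijection between feasible assignments $\mathbf{X}$ for the integer model and subsets $\mathcal{E} \subseteq E$ that satisfy Properties 1 and 2 of Lemma~\ref{theorem:def:properties}, sending $\mathbf{X}$ to $\mathcal{E}_{\mathbf{X}} = \{(i,j) \in E : x_{i,j} = 1\}$. Since the objective $\sum_{(i,j)\in E} x_{i,j}$ equals $|\mathcal{E}_{\mathbf{X}}|$ and, by Lemma~\ref{theorem:def:unique}, $|\mathcal{E}_{\mathbf{X}}| = |E'| \cdot |\mathcal{W}|$ for the unique non-overlapping embedding set $\mathcal{W}$ with $\phi(\mathcal{W}) = \mathcal{E}_{\mathbf{X}}$, maximizing the integer objective is equivalent to maximizing $|\mathcal{W}|$, which is exactly the MI problem by Definition~\ref{definition:formal}.

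For the forward direction, given a feasible $\mathbf{X}$ I would verify Properties 1 and 2 for $\mathcal{E}_{\mathbf{X}}$. Constraint~(\ref{constraint:no edge}) forces $\mathcal{E}_{\mathbf{X}} \subseteq E$. Constraint~(\ref{constraint:edge limit}), together with the interpretation of $h_{V,i,j}$ as counting the fully-selected embeddings whose image of the motif edge $(1,2)$ is $(i,j)$, gives: for each $(i,j) \in \mathcal{E}_{\mathbf{X}}$, there is exactly one such embedding, and its remaining $|E'|-1$ edges form the witness set $S_{(i,j)}$ required by Property 1. Constraint~(\ref{constraint:node limit}) then handles Property 2: whenever $(i,j), (k,t) \in \mathcal{E}_{\mathbf{X}}$ share an endpoint, the vanishing of $h_{V \setminus \{k,t\}, i,j} + h_{V \setminus \{i,j\}, k,t}$ combined with Constraint~(\ref{constraint:edge limit}) rules out the unique fully-selected embeddings through $(i,j)$ and through $(k,t)$ being distinct; hence $(i,j) \in S_{(k,t)}$ and $(k,t) \in S_{(i,j)}$.

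For the reverse direction, given any $\mathcal{E} \subseteq E$ satisfying Properties 1 and 2, I would define $x_{i,j} = 1$ if $(i,j) \in \mathcal{E}$ and $0$ otherwise, then verify each constraint in turn: Constraint~(\ref{constraint:no edge}) is immediate; Constraint~(\ref{constraint:edge limit}) follows because Property 1 plus Lemma~\ref{theorem:def:unique} guarantees that each selected edge lies in exactly one fully-selected embedding; Constraint~(\ref{constraint:node limit}) follows from Property 2, which forbids any shared-endpoint pair from inducing a second disjoint fully-selected embedding. Combining both directions with the correspondence $\mathcal{E} \leftrightarrow \mathcal{W}$ from Lemmas~\ref{theorem:def:unique} and~\ref{theorem:def:properties} completes the argument.

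The main obstacle will be the Constraint~(\ref{constraint:node limit}) step in each direction. The algebraic expression $h_{V \setminus \{k,t\}, i,j} + h_{V \setminus \{i,j\}, k,t} = 0$ is subtle because one must carefully analyze which permutation-defined embeddings are excluded when $\{i,j\}$ and $\{k,t\}$ intersect, and then translate this exclusion into the structural statement that the two selected edges belong to a common motif embedding. I would address this with a case split on the size of $\{i,j\} \cap \{k,t\}$ and on whether $(k,t) \in S_{(i,j)}$, in each case relating the vanishing of the $h$-terms to the existence or non-existence of an alternative fully-selected embedding that would witness overlap.
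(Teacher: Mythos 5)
Your proposal is correct and follows essentially the same route as the paper's own proof: both translate Constraints~(\ref{constraint:edge limit})--(\ref{constraint:no edge}) into Properties 1 and 2 of Lemma~\ref{theorem:def:properties} (via the interpretation of $h_{V,i,j}$ as the number of fully-selected embeddings containing $(i,j)$) and then conclude through the alternative formulation in Definition~\ref{definition:alternative}. The only difference is that you make explicit the reverse direction (every edge set satisfying the two properties induces a feasible assignment) and the cardinality identity $|\mathcal{E}_{\mathbf{X}}| = |E'|\cdot|\mathcal{W}|$, both of which the paper leaves implicit when passing from ``maximum feasible $\mathbf{X}$'' to ``maximum number of non-overlapping motifs.''
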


In order to design a quantum solution for the MI problem, given a set of variables $\mathbf{X}$, we represent the integer model of the MI problem as an unconstrained objective function $f: \mathbf{X} \rightarrow \mathcal{R}$. The function $f$ includes a cost function $f_c$ which evaluates the quality of the input $\mathbf{X}$ (i.e. the number of edges) and three penalty functions $f_{p_1}$, $f_{p_2}$, and $f_{p_3}$ which validate the input $\mathbf{X}$ in term of Constraints~(\ref{constraint:edge limit}), (\ref{constraint:node limit}), and~(\ref{constraint:no edge}) respectively. In details, we have:
\begin{align}
    f(\mathbf{X}) = -f_{c}(\mathbf{X}) + f_{p_1}(\mathbf{X})+ f_{p_2}(\mathbf{X})+ f_{p_3}(\mathbf{X})
\end{align}

The target function $f_{c}(\mathbf{X}) = \sum_{(i,j) \in E} x_{i,j}$ is equivalent to the target of the integer model. $f_{c}(\mathbf{X})$ returns the number of selected edges in $\mathbf{X}$. The penalty function $f_{p_1}$ ensures that the assignment $\mathbf{X}$ satisfies Constraint~(\ref{constraint:edge limit}). $f_{p_1}(\mathbf{X})$ returns 0 if $\mathbf{X}$ satisfies Constraint~(\ref{constraint:edge limit}), and returns a large number otherwise. Given a large constant $A_1$, we compute the first penalty function as:
\begin{align}
    f_{p_1}(\mathbf{X}) = A_1\sum_{(i,j) \in E}(x_{i,j} - h_{V,i,j})^2
\end{align}

Similarly, given a large constant $A_2$, we compute the second penalty function as:
\begin{align}
    f_{p_2}(\mathbf{X}) = A_2\sum_{(i,j), (k,t) \in E, |\{i,j\}\cap\{k,t\}| \geq 1} x_{i,j}x_{k,t}(h_{V \setminus \{k,t\}, i,j} + h_{V \setminus \{i,j\}, k,t})^2
\end{align}

Finally, given a large constant $A_3$, we compute the third penalty function as:
\begin{align}
    f_{p_3}(\mathbf{X}) = A_3\sum_{(i,j) \notin E}x_{i,j}
\end{align}

\begin{theorem}
    The assignment of $\mathbf{X}$, which minimizes the function $f$, optimally solves the $MI$ problem.
\end{theorem}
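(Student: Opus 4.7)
The plan is to reduce the claim to Theorem~\ref{theorem:IP correctness} through a standard quadratic-penalty argument. I would first verify that each penalty term $f_{p_k}$ is non-negative on every binary assignment $\mathbf{X}$ and vanishes precisely when the corresponding constraint of the integer model holds. For $f_{p_1}$ this is transparent, since it is a weighted sum of squares. For $f_{p_3}$ it is immediate, because every summand is a non-negative binary indicator. For $f_{p_2}$, each summand has the form $x_{i,j} x_{k,t} (\,\cdot\,)^2$ with the squared factor real-valued, so the summand is non-negative; the ``iff'' characterization then reproduces Constraints~(\ref{constraint:edge limit})--(\ref{constraint:no edge}) verbatim by matching each zero case of a penalty summand with the corresponding zero case of the integer constraint.

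Next I would establish a quantitative penalty bound. Since $\mathbf{X}$, $\gamma$, $\gamma'$, and $\Omega$ take integer values, every $h_{V,i,j}$ and every difference or sum of such quantities is itself an integer. Consequently, whenever one of Constraints~(\ref{constraint:edge limit}),~(\ref{constraint:node limit}), or~(\ref{constraint:no edge}) is violated, at least one summand of the corresponding penalty is an integer that is at least $1$, so $f_{p_k}(\mathbf{X}) \ge A_k$. On the other hand, the cost function is trivially bounded as $f_c(\mathbf{X}) \le |E|$, while the all-zero assignment $\mathbf{X} = \mathbf{0}$ is feasible and yields $f(\mathbf{0}) = 0$. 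Choosing $A_1, A_2, A_3 > |E|$ therefore implies $f(\mathbf{X}) \ge A_k - |E| > 0 \ge f(\mathbf{0})$ for any infeasible $\mathbf{X}$, so no infeasible assignment can be a global minimizer of $f$.

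On the feasible set, the penalty terms vanish and $f(\mathbf{X}) = -f_c(\mathbf{X})$, so minimizing $f$ is equivalent to maximizing the integer program objective $\sum_{(i,j)\in E} x_{i,j}$. Invoking Theorem~\ref{theorem:IP correctness}, the resulting minimizer corresponds to an optimal MI solution. The main obstacle lies in the integrality argument for $f_{p_2}$: because the coefficient $c_{i,j,i',j'}$ can take the large negative value $1 - \Omega$ whenever a node pair is not an edge, the quantity $h_{V\setminus\{k,t\},i,j} + h_{V\setminus\{i,j\},k,t}$ may be a large signed integer rather than a small non-negative one, and care is needed to confirm that a single violation still contributes at least $A_2$ to $f$ once the interaction with Constraint~(\ref{constraint:no edge}) and the sign of the enclosed integer are accounted for. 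Once that bookkeeping is settled, the penalty-domination computation and the appeal to Theorem~\ref{theorem:IP correctness} are routine.
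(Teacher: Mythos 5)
Your proposal is correct and follows essentially the same route as the paper's own proof: the penalties vanish exactly on the feasible assignments of the integer model, so for sufficiently large $A_1,A_2,A_3$ the minimizer of $f$ is a maximum feasible solution, and Theorem~\ref{theorem:IP correctness} finishes the argument. The only difference is that you make the penalty-domination step quantitative where the paper merely asserts it, and the obstacle you flag for $f_{p_2}$ dissolves immediately: since each $h$ is an integer, the factor $\left(h_{V\setminus\{k,t\},i,j}+h_{V\setminus\{i,j\},k,t}\right)^2$ is a non-negative integer that is at least $1$ whenever Constraint~(\ref{constraint:node limit}) is violated with $x_{i,j}=x_{k,t}=1$, so the sign of the enclosed integer is irrelevant and the violating summand still contributes at least $A_2$.
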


\subsection{A quantum circuit design for the MI problem}\label{sec:quantum circuit}
Here, we provide detailed description of the quantum circuit which QAOA employs to solve the MI problem. Given the cardinality of the set $\mathbf{X}$ as $r = |\mathbf{X}|$, the circuit is designed to operate on a $r-$qubit system. Specifically, each assignment of $\mathbf{X}$ corresponds to a basis state in the $r-$qubit system. As the mixing Hamiltonian is fixed, we need to construct the initial state $|S_0\rangle$, and the problem Hamiltonian $H_P$ for the circuit.

\begin{wrapfigure}{r}{0.5\textwidth}
  \centering
  \includegraphics[width=0.5\columnwidth]{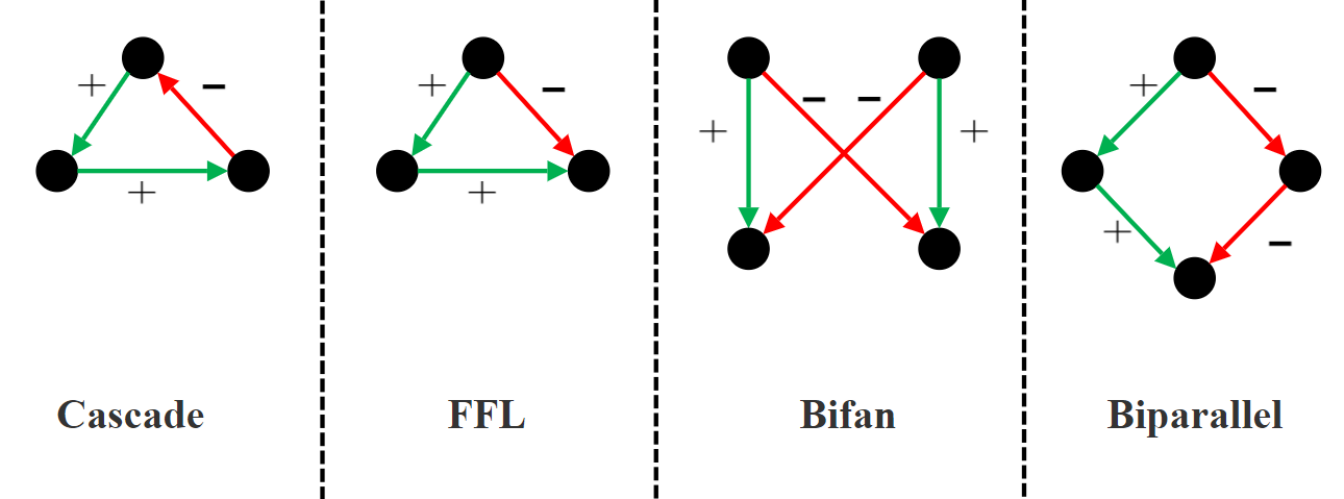}
  
    \caption{Four motif patterns with corresponding regulatory relations used in the synthetic dataset. The green color (or +) represents activation, while the red one (or -) represents repression}
    \label{fig:motif example}
\end{wrapfigure}

First, we define the initial state $|S_0\rangle$ used in QAOA as a superposition of all possible basis states with equal amplitudes. $|S_0\rangle$ can be expressed as: $$|S_0\rangle = (|0\rangle + |1\rangle)^{\otimes r}$$



We define the problem Hamiltonian $H_P$ which encodes objective function $f$ such that $H_P|\mathbf{X}\rangle = f(\mathbf{X})|\mathbf{X}\rangle$. Given a variable $x \in X$, we define $Z^{(x)}$ as the Pauli-Z gate that acts on the qubit corresponding to $x$. Given the identity operator $I$, we can construct $H_P$ by substituting each variable $x \in X$ in the objective function $f$ as $\frac{1}{2} (I-Z^{(x)})$~\cite{Hadfield2021}. By measuring this circuit, we can obtain a quantum state that represents the distribution of potential solutions for the MI problem.

\section{Experiments}
In this section, we assess the performance of \OurAlgo using synthetic and real datasets. We focus on four motif topologies, namely cascade, feed forward loop (FFL), bifan, and biparallel, which occur frequently in biological networks~\cite{Milo2002}. For each motif, we consider common regulatory relationships reported in the literature~\cite{Ingram2006,Kim2008,Papatsenko2009}. Figure~\ref{fig:motif example} depicts the motifs and their regulatory relationships.

\noindent{\bf Datasets.} We use synthetic and real datasets in our experiments.

\noindent
\emph{Synthetic datasets: } In order to examine the performance of \OurAlgo under networks with diverse topological properties, we conduct benchmarking experiments using synthetic datasets. The properties that govern synthetic datasets are as follows. First, we define the number of nodes and the average degree of nodes as $n$ and $d$, respectively. These two parameters control the size and density of the network. We define the ratio of activating interactions in a network as $r$. This parameter influences the distribution of activation and repression interactions. We construct a synthetic network by first creating a predefined number of motifs of a given motif topology. We then randomly insert edges and their regulatory interactions until the network size, density, and ratio of activation constraints are satisfied. We generate different networks by varying these parameters as: $n \in \{200, 400, 600, 800, 1000\}$, $d \in \{2, 4, 6, 8, 10\}$, $r \in \{0.2, 0.5, 0.8\}$ for each of the four motif types shown in Figure~\ref{fig:motif example}. For each combination of these parameters, we generate five synthetic networks. In total, we have $5 \times 5 \times 3 \times 4 \times 5 = 1500$ synthetic networks.

\noindent
\emph{Real datasets: } In order to evaluate the performance of \OurAlgo on real datasets, we use the Transcriptional Regulatory Relationships Unraveled by Sentence-based Text Mining (TRRUST) dataset \cite{TRRUST2017}. TRRUST is a manually curated database of human and mouse transcriptional regulatory networks, though we are only interested in the human networks. The total number of human transcriptional regulatory interactions in this dataset is 9396, with each being labeled Repression, Activation, or Unknown. We focus on neurodegenerative diseases, specifically Alzheimer's, Parkinson's, Huntington's, Amyotrophic Lateral Sclerosis (ALS), and Motor Neurone Disease (MND). Through DisGeNet, we find the Gene-Disease Associations to find which genes are related to the listed diseases, with correlation being given through a score from 0 to 1. We then tailor our TRRUST dataset to include only genes that are associated with the specific diseases being investigated.

\noindent{\bf The baseline method.}
For our baseline method, we use the method introduced in \cite{ren2019}. The method searches for all possible motif embeddings in the network, then calculates the number of embeddings that cannot be selected along with the current embedding, recording it as the loss value. It then iteratively selects embeddings with the least loss until no more independent embedding can be chosen.

\noindent{\bf Implementation.}
In the integer model for the MI problem, the number of required qubits corresponds to the number of edges in the network $G$. Due to the limitation on the number of qubits in current quantum machines, we employ a partitioning technique to address the large networks. In details, we divide the initial network $G$ into a collection of sub-networks, and then apply \OurAlgo on each sub-networks. Finally, we aggregate resulting motifs on sub-networks to derive the total motifs presented in the initial network $G$. Our partitioning technique ensures that motifs in the final solution are pairwise non-overlapping. In addition, we implement and test \OurAlgo using IBM quantum simulators~\cite{gadi_aleksandrowicz_2019_2562111}. The details of our implementation can be found in \emph{https://github.com/ngominhhoang/Quantum-Motif-Identification.git}.

\subsection{Evaluation on synthetic datasets}

\begin{figure}[t]
\begin{subfigure}{.24\textwidth}
  \centering
  \includegraphics[width=1\textwidth]{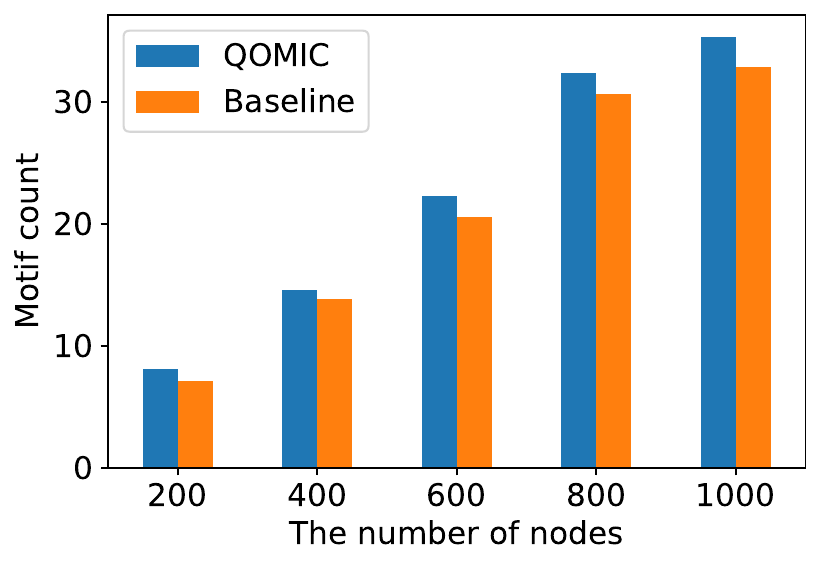}
  \caption{Cascade}
  \label{fig:syn:cascade:nums nodes}
\end{subfigure}
\hfill
\begin{subfigure}{.24\textwidth}
  \centering
  \includegraphics[width=1\columnwidth]{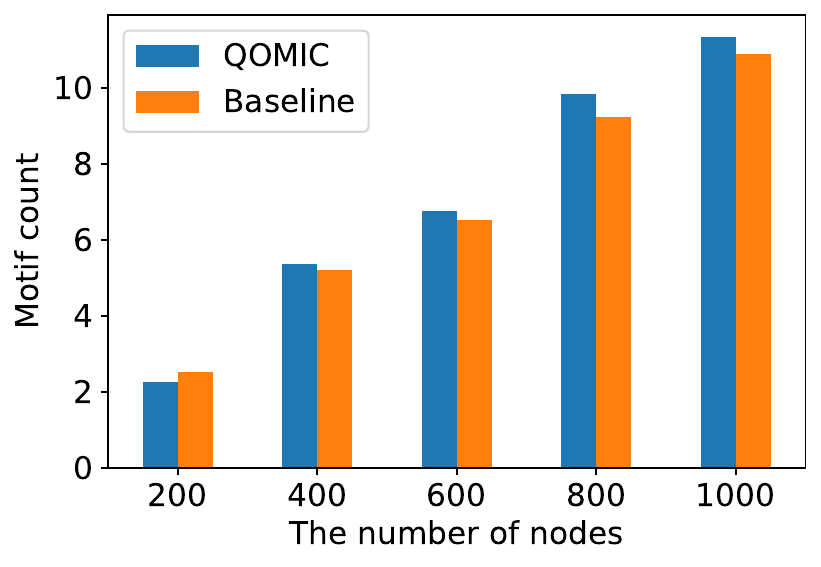}
  \caption{FFL}
  \label{fig:syn:FFL:nums nodes}
\end{subfigure}
\hfill
\begin{subfigure}{.24\textwidth}
  \centering
  \includegraphics[width=1\columnwidth]{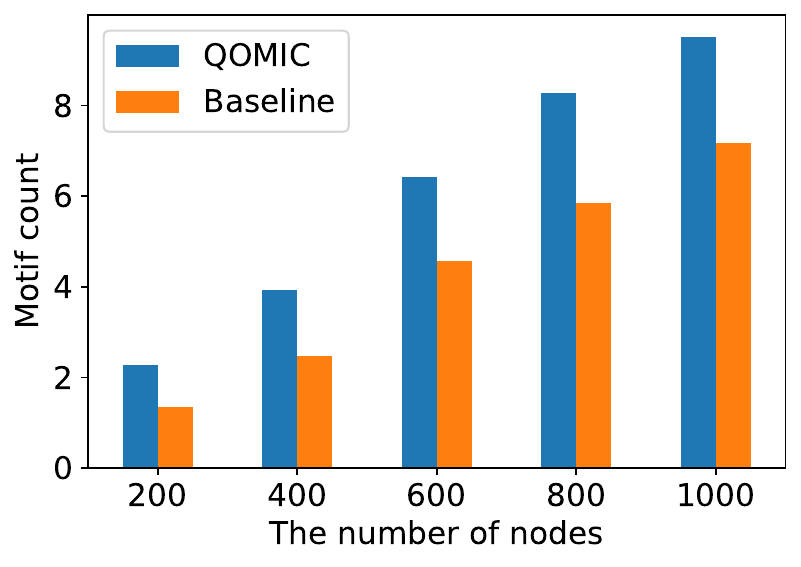}
  \caption{Bifan}
  \label{fig:syn:bifan:nums nodes}
\end{subfigure}
\hfill
\begin{subfigure}{.24\textwidth}
  \centering
  \includegraphics[width=1\columnwidth]{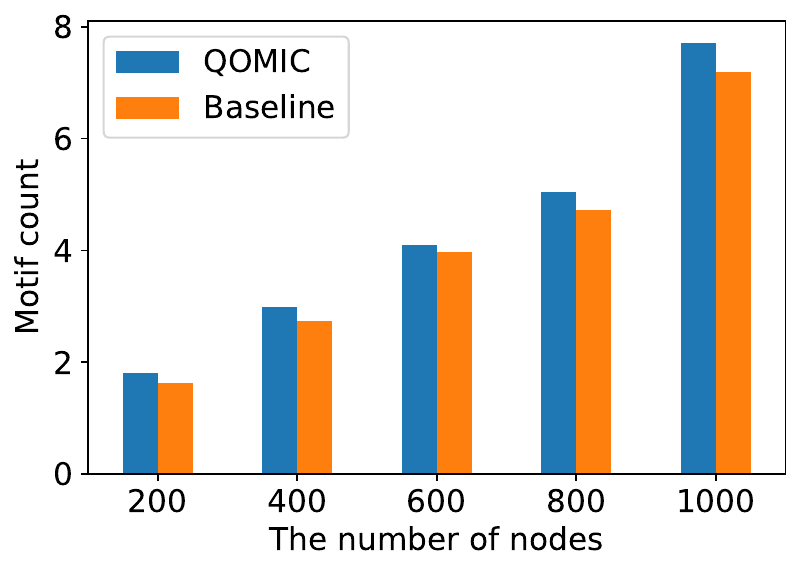}
  \caption{Biparallel}
  \label{fig:syn:biparallel:nums nodes}
\end{subfigure}
    \caption{Analysis of the \OurAlgo and the baseline method in term of the number of motif embeddings found by varying the number of nodes in the synthetic networks}
    \label{fig:syn:nums nodes}
\end{figure}

\begin{figure}[t]
\begin{subfigure}{.24\textwidth}
  \centering
  \includegraphics[width=1\textwidth]{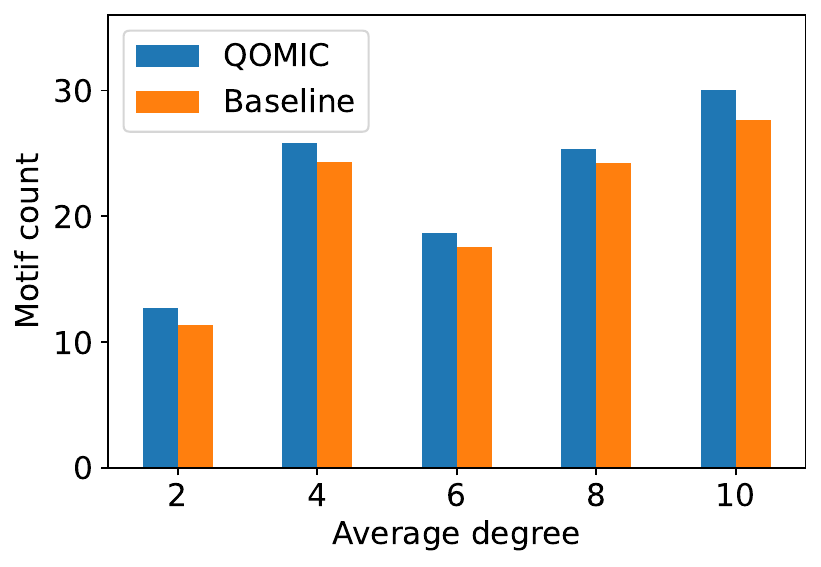}
  \caption{Cascade}
  \label{fig:syn:cascade:degree}
\end{subfigure}
\hfill
\begin{subfigure}{.24\textwidth}
  \centering
  \includegraphics[width=1\columnwidth]{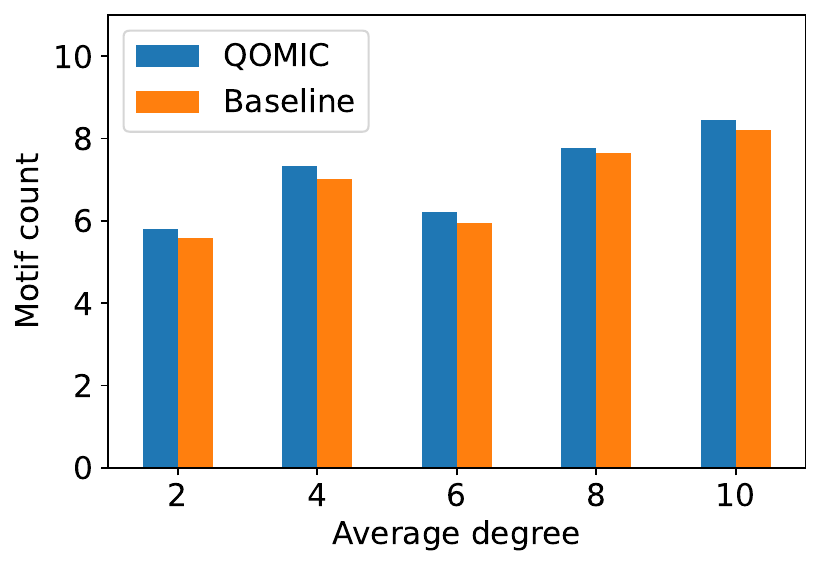}
  \caption{FFL}
  \label{fig:syn:FFL:degree}
\end{subfigure}
\hfill
\begin{subfigure}{.24\textwidth}
  \centering
  \includegraphics[width=1\columnwidth]{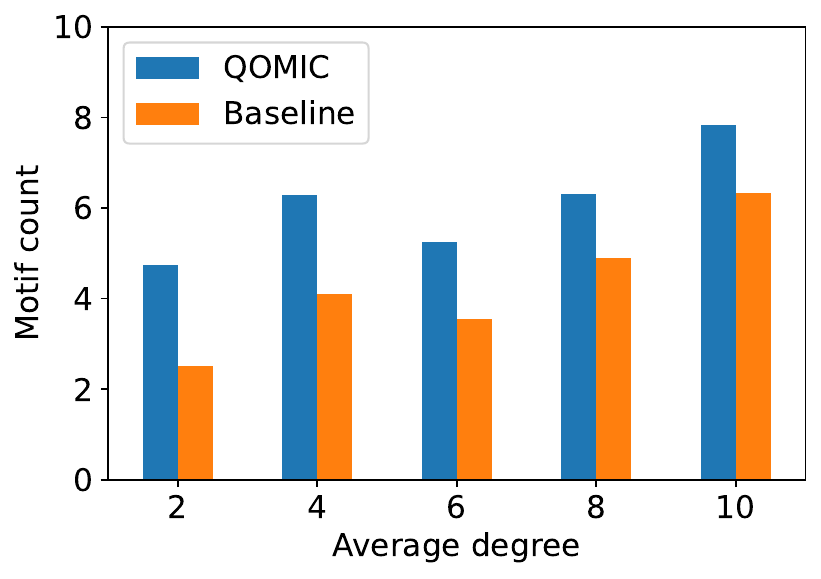}
  \caption{Bifan}
  \label{fig:syn:bifan:degree}
\end{subfigure}
\hfill
\begin{subfigure}{.24\textwidth}
  \centering
  \includegraphics[width=1\columnwidth]{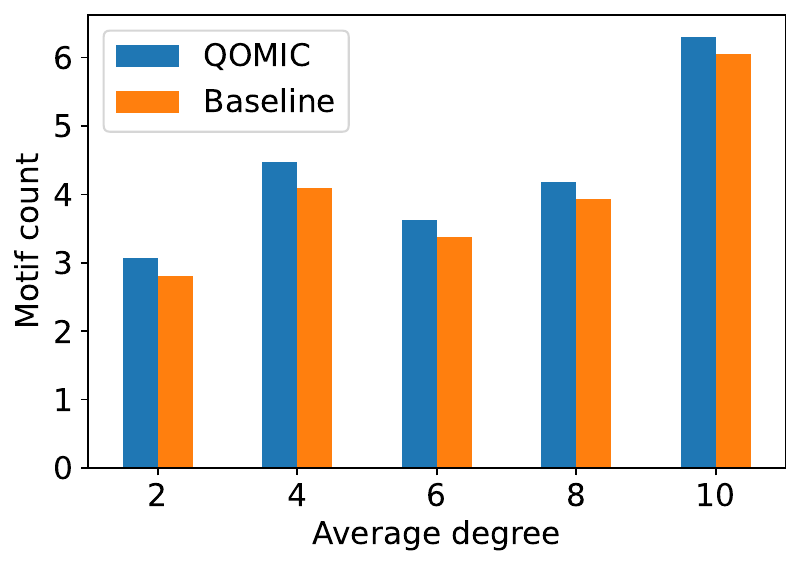}
  \caption{Biparallel}
  \label{fig:syn:biparallel:degree}
\end{subfigure}
    \caption{Analysis of the \OurAlgo and the baseline method in term of the number of motif embeddings found by varying the density in the synthetic networks}
    \label{fig:syn:degree}
\end{figure}

\begin{figure}[t]
\begin{subfigure}{.24\textwidth}
  \centering
  \includegraphics[width=1\textwidth]{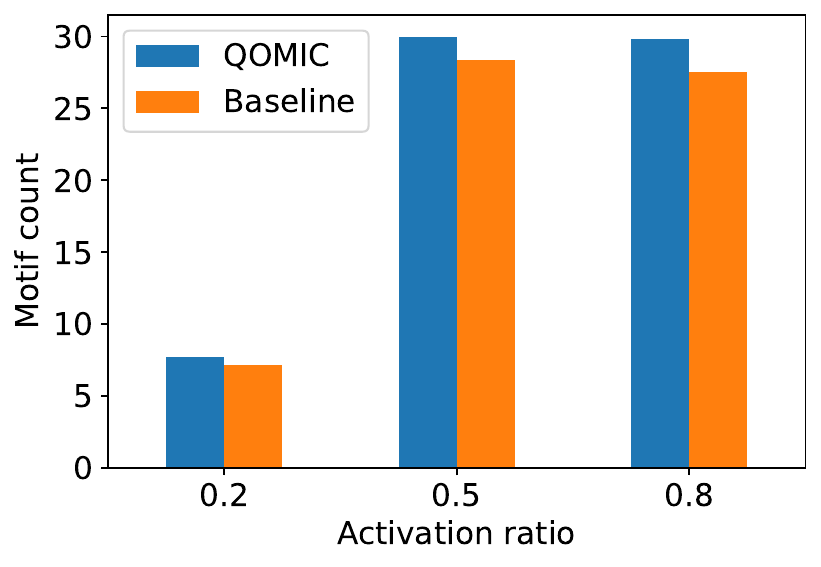}
  \caption{Cascade}
  \label{fig:syn:cascade:ratio}
\end{subfigure}
\hfill
\begin{subfigure}{.24\textwidth}
  \centering
  \includegraphics[width=1\columnwidth]{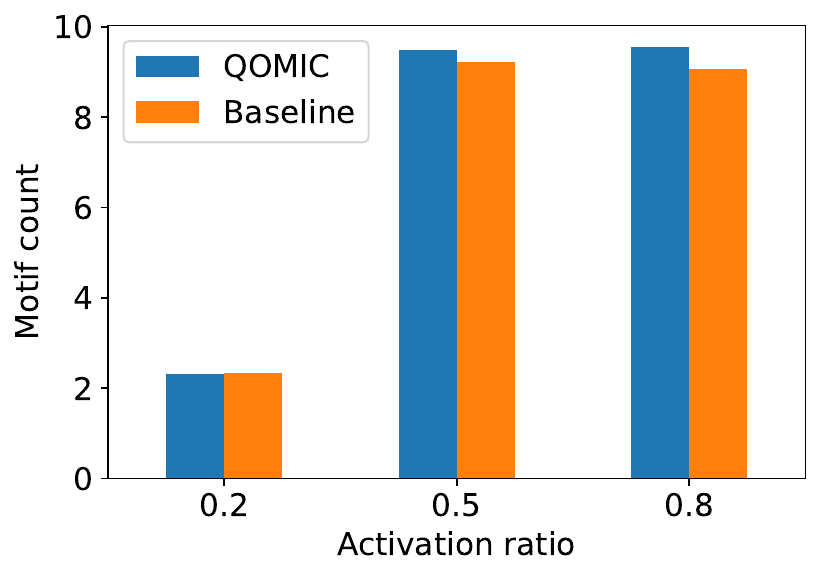}
  \caption{FFL}
  \label{fig:syn:FFL:ratio}
\end{subfigure}
\hfill
\begin{subfigure}{.24\textwidth}
  \centering
  \includegraphics[width=1\columnwidth]{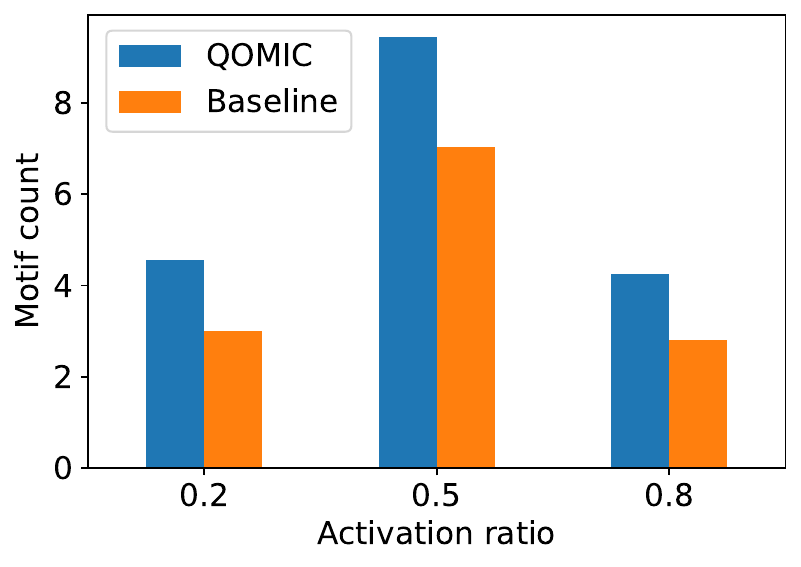}
  \caption{Bifan}
  \label{fig:syn:bifan:ratio}
\end{subfigure}
\hfill
\begin{subfigure}{.24\textwidth}
  \centering
  \includegraphics[width=1\columnwidth]{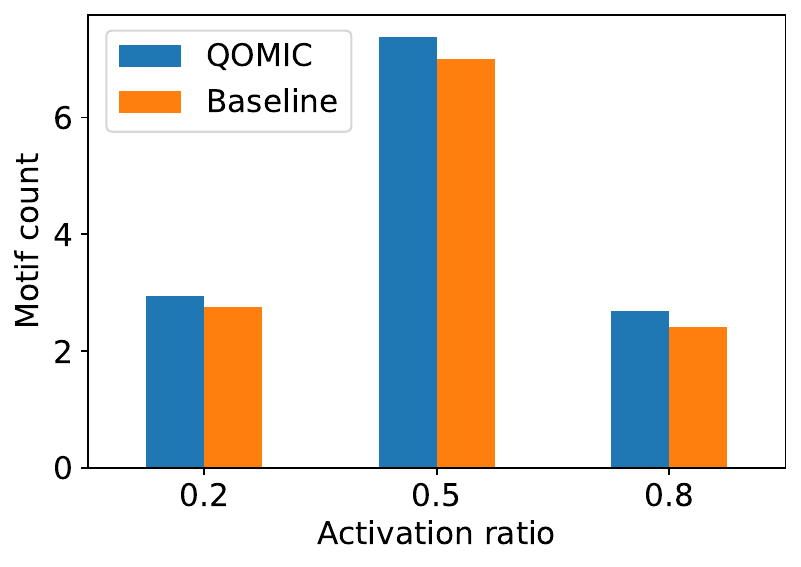}
  \caption{Biparallel}
  \label{fig:syn:biparallel:ratio}
\end{subfigure}
    \caption{Analysis of the \OurAlgo and the baseline method in term of the number of motif embeddings found by varying the activation ratio in the synthetic networks}
    \label{fig:syn:ratio}
\end{figure}

We benchmark the performance of \OurAlgo and the baseline method on different criteria of synthetic datasets including network size, network density and the distribution of regulatory relationships. For each experiment, we compare two methods in terms of the number of resulting motifs. In addition, we compare the running times of \OurAlgo and the baseline method on different network size for two motif patterns including bifan and biparallel.

\noindent{\bf The impact of network size.}
In this experiment, we compare the performance of two methods under different network sizes ranging from 200 to 1000 nodes. Figure~\ref{fig:syn:nums nodes} shows the average number of motifs resulting from \OurAlgo and the baseline method for each network size. We observe that \OurAlgo consistently outperforms the baseline method in identifying all four motif types. Specifically, the number of cascade, FFL, bifan, and biparallel motifs detected by \OurAlgo exceed those found by the baseline method by 6.2\%, 3.4\%, 41.9\%, and 6.9\%, respectively. Additionally, the disparity in solution quality between the two methods becomes more significant in the case of the bifan and biparallel motifs which possess more complex topologies compared to the cascade and FFL motifs. On the other hand, it is significant to note that cascade motifs are more prevalent than the other three motif types at the same network size. This occurrence can be attributed to the cyclic topology of the cascade motif, which relaxes constraints on the regulatory relationships within the motif. Finally, as the network size increases, the gap between \OurAlgo and the baseline method grows in favor of our method. Thus, \OurAlgo is even more advantageous when dealing with complex motif topologies and large networks.

\begin{wrapfigure}{t}{.4\textwidth}
    \vspace*{-0.5cm}
    \begin{minipage}{\linewidth}
    \centering\captionsetup[subfigure]{justification=centering}
    \includegraphics[width=\linewidth]{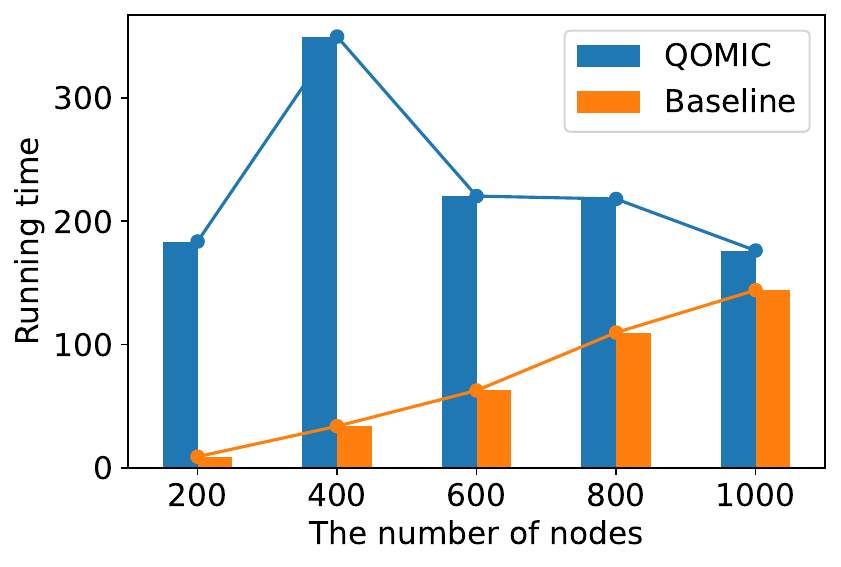}
    \subcaption{Bifan}
    \label{fig:syn:running time:bifan}\par\vfill
    \includegraphics[width=\linewidth]{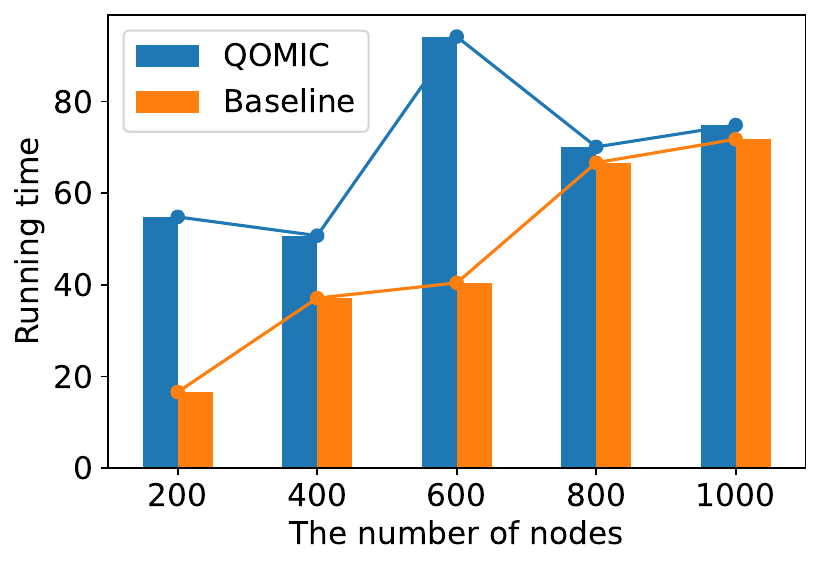}
    \subcaption{Biparallel}
    \label{fig:syn:running time:biparallel}
\end{minipage}
\caption{Analysis of the \OurAlgo and the baseline method in term of the running time}\label{fig:syn:running time}
    \vspace*{-0.5cm}
\end{wrapfigure}

\noindent{\bf The impact of network density.}
Next, we compare two methods by varying the density of networks from 2 to 10. Figure~\ref{fig:syn:degree} illustrates the average number of motifs obtained using \OurAlgo and the baseline method. Similar to the previous results, \OurAlgo provides superior solutions than the baseline method in all cases. However, unlike the previous findings where the motif count increased with the number of nodes, it is noteworthy that networks with higher average degree may yield fewer motifs. Specifically, in four cases of motif types, the number of motifs found in networks with a density of 6 is lower than in networks with a density of 4. This is due to the fact that additional edges might lead to overlapping motifs which violate the constraint of the MI problem. Consistent with our previous results, we observe more gain in motif count using our method for complex motif topologies, such as bifan.

\noindent{\bf The impact of regulatory ratio.}
Then, we consider the distribution of regulatory relationships in networks. In details, we examine the ratio of activation relationships as $0.2$, $0.5$ and $0.8$. As this ratio gets close to $0.5$, the interaction types get more heterogeneous. Figure~\ref{fig:syn:ratio} presents the average number of motifs obtained using \OurAlgo and the baseline method. We observe that \OurAlgo identifies more motifs than the baseline method in $11/12$ cases and yields the same result in $1/12$ case. Additionally, for cascade and FFL motifs, networks with activation probabilities of $0.5$ and $0.8$ exhibit a notably higher motif count compared to networks with a 0.2 ratio. On the other hand, for the bifan and biparallel motifs, the motif count of networks which have activation probability of $0.8$ surpasses the motif count of networks with two other ratios. These observations suggest that networks which have similar activation ratio with the ratio of the target motif are more likely to contain valid motif embeddings.

\noindent{\bf Running time.}
Finally, we examine the running time of \OurAlgo and the baseline method with different network sizes from 200 to 1000 nodes. The running time of \OurAlgo for a single network instance is measured as the accumulative time of four steps in QAOA applied to that instance.
Figure~\ref{fig:syn:running time} illustrates the running time comparison between two methods. We observe that while the running time of the baseline method scales linearly along with the network sizes, the running time of \OurAlgo is independent with the network sizes. It comes from the fact that the running time of \OurAlgo is heavily depended on the complexity of the quantum circuit (i.e., the total number of quantum gates to encode the objective function $f$, and the actual time to execute these gates), and the efficiency of the optimizer in finding the optimal parameters. These factors do not strictly scale with the number of network nodes. 

Furthermore, as the network size grows, the running time of \OurAlgo gets closer and even potentially surpasses the running time of the baseline method. Specifically, for graphs with 1000 nodes, the time difference between \OurAlgo and the baseline are approximately 20 seconds in for the bifan pattern, and 2 seconds for the biparallel pattern. It is important to note that the quantum computing is still an evolving field. With the rapid development in the quantum computing technology, the running time of quantum computing can be further reduced. Thus,  quantum computing is promising in solving complex biological problems with small computational cost.

\begin{wraptable}{r}{9.cm}
\vspace*{-1.5cm}
{\footnotesize
\begin{tabular}{|c|ccc||ccc||ccc||ccc|}
\hline
 & \multicolumn{3}{c||}{Cascade} & \multicolumn{3}{c||}{FFL} & \multicolumn{3}{c||}{Bifan} & \multicolumn{3}{c|}{Biparallel} \\ \cline{2-13} 
\multicolumn{1}{|l|}{} & \multicolumn{1}{l|}{MC} & \multicolumn{1}{c|}{AC} & RC & \multicolumn{1}{l|}{MC} & \multicolumn{1}{c|}{AC} & RC & \multicolumn{1}{l|}{MC} & \multicolumn{1}{c|}{AC} & RC & \multicolumn{1}{l|}{MC} & \multicolumn{1}{c|}{AC} & RC \\ \hline
Alzheimers & \multicolumn{1}{c|}{7} & \multicolumn{1}{c|}{3} & 4 & \multicolumn{1}{c|}{52} & \multicolumn{1}{c|}{43} & 37 & \multicolumn{1}{c|}{72} & \multicolumn{1}{c|}{85} & 57 & \multicolumn{1}{c|}{27} & \multicolumn{1}{c|}{27} & 25 \\ \hline
Parkinsons & \multicolumn{1}{c|}{5} & \multicolumn{1}{c|}{6} & 2 & \multicolumn{1}{c|}{48} & \multicolumn{1}{c|}{36} & 27 & \multicolumn{1}{c|}{55} & \multicolumn{1}{c|}{62} & 36 & \multicolumn{1}{c|}{22} & \multicolumn{1}{c|}{19} & 18 \\ \hline
Huntingtons & \multicolumn{1}{c|}{7} & \multicolumn{1}{c|}{4} & 5 & \multicolumn{1}{c|}{32} & \multicolumn{1}{c|}{24} & 26 & \multicolumn{1}{c|}{41} & \multicolumn{1}{c|}{51} & 30 & \multicolumn{1}{c|}{12} & \multicolumn{1}{c|}{13} & 13 \\ \hline
ALS & \multicolumn{1}{c|}{6} & \multicolumn{1}{c|}{2} & 1 & \multicolumn{1}{c|}{32} & \multicolumn{1}{c|}{17} & 24 & \multicolumn{1}{c|}{40} & \multicolumn{1}{c|}{41} & 29 & \multicolumn{1}{c|}{12} & \multicolumn{1}{c|}{15} & 8 \\ \hline
MND & \multicolumn{1}{c|}{1} & \multicolumn{1}{c|}{1} & 1 & \multicolumn{1}{c|}{3} & \multicolumn{1}{c|}{1} & 3 & \multicolumn{1}{c|}{5} & \multicolumn{1}{c|}{6} & 6 & \multicolumn{1}{c|}{1} & \multicolumn{1}{c|}{0} & 0 \\ \hline
\end{tabular}
}
\caption{The statistics on the motif count (MC), activation count (AC) and repression count (RC) per motif patterns and diseases}
\label{tab:real:statistics}
\end{wraptable}

\subsection{Evaluation on real datasets}
Here, we discuss about the efficiency of \OurAlgo in five practical human regulatory networks. These network includes genes related to neurodegenerative diseases including Alzheimers, Parkinsons, Huntingtons, ALS and MND. Motif patterns we aim to identify include cascade, FFL, bifan and biparallel. In this experiments, for the sake of generality, we identify motifs without imposing any constraint on the activation ratio.

\noindent{\bf Motif distribution.}
Table~\ref{tab:real:statistics} lists the number of motifs found, as well as the number of activation and repression relations per motif patterns and diseases. Among four motif patterns, the cascade pattern contributes the fewest number of motifs, accounting for only 5.4\% of the total, while the bifan pattern contributes the most number of motifs, with 44.3\% in total. This observation suggests that in regulatory networks associated with five diseases, the bifan topology is the most prevalent, whereas the triangle loop topology (cascade) is relatively rare. In addition, we observe that among five diseases, networks associated with Alzeheimer's and Parkinson's contribute nearly 60\% of the total number of motifs, while the network of MND only contributes roughly 2\%. This phenomenon suggests that genes related to Alzheimer's and Parkinson's exhibit strong regulatory relationships with each other by four popular motif patterns. On the other hand, among the motifs discovered, the total count of activation relations is about $1.3$ times greater than the total count of repression relations. This ratio is consistent with the ratios observed in synthetic motif patterns. Furthermore, the sum of activation and repression counts is moderately smaller than the total number of edges within the motifs found in nearly all cases. This is because of a large number of relationships between genes being categorized as unknown. On average, each motif embedding found includes approximately 50\% of unknown edges.

\noindent{\bf Frequency distribution of motif genes across diseases.}
Here, we invest in the appearance of genes in motifs found. We denote a gene associated with a motif as \emph{motif gene}. Figure~\ref{fig:real:unique genes} illustrates the number of motif genes appeared with different frequency in five diseases. In all four motifs, the number of motif genes is inversely proportional to the number of appearance of motif genes in five diseases. Specifically, the number of motif genes included in exact one disease is even more than the total number of motif genes included in more than one disease. This observation shows that each disease includes an own set of genes which are topologically related to each others. 

When we delve deeper into the proportions of these unique genes to each disease, we find out that motif genes uniquely related to Alzheimer's and Parkinson's diseases account for more than $60\%$ in total number of unique motif genes for all motif types. However, not every disease owns a strong set of unique motif genes. Specifically, the number of motif genes exclusively linked to the MND disease is fewer than the number of motif genes related to all five diseases in almost cases of motif patterns. We infer that motif genes related to the MND disease may have a broader relevance, as they are also associated with various other diseases. 

\begin{figure}[t]
\begin{subfigure}{.24\textwidth}
  \centering
  \includegraphics[width=1\textwidth]{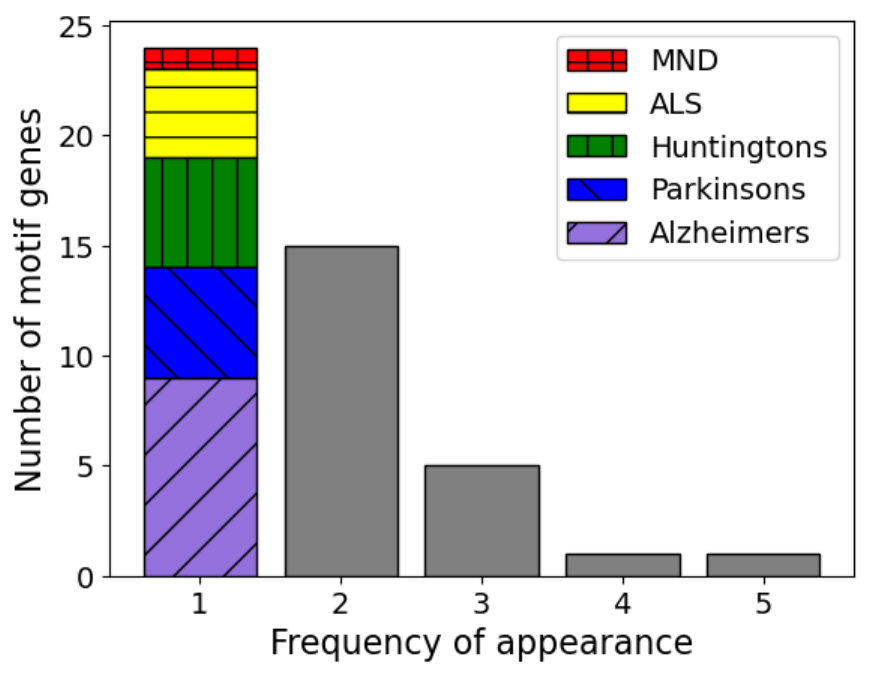}
  \caption{Cascade}
  \label{fig:real:cascade:freq}
\end{subfigure}
\hfill
\begin{subfigure}{.24\textwidth}
  \centering
  \includegraphics[width=1\columnwidth]{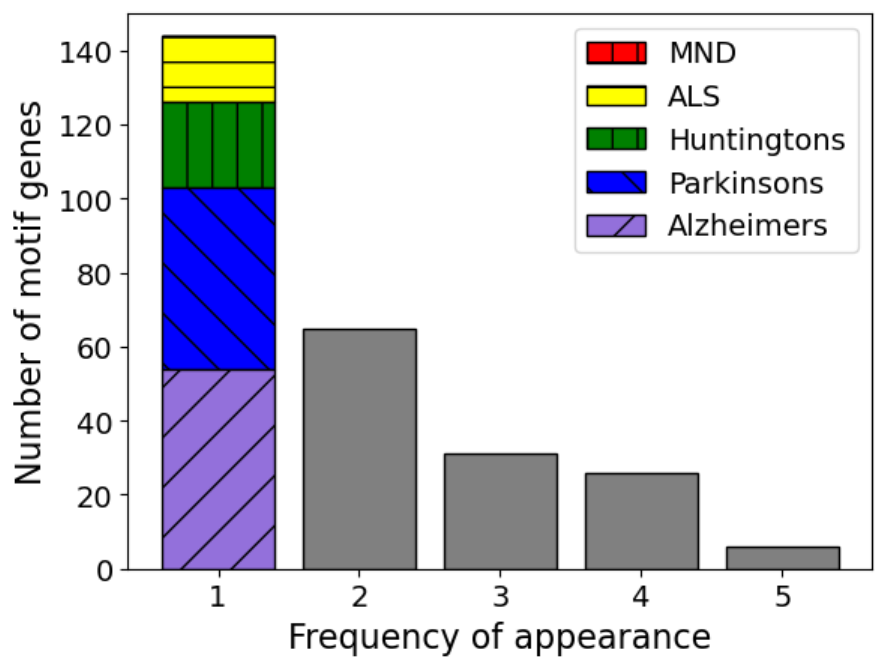}
  \caption{FFL}
  \label{fig:real:FFL:freq}
\end{subfigure}
\hfill
\begin{subfigure}{.24\textwidth}
  \centering
  \includegraphics[width=1\columnwidth]{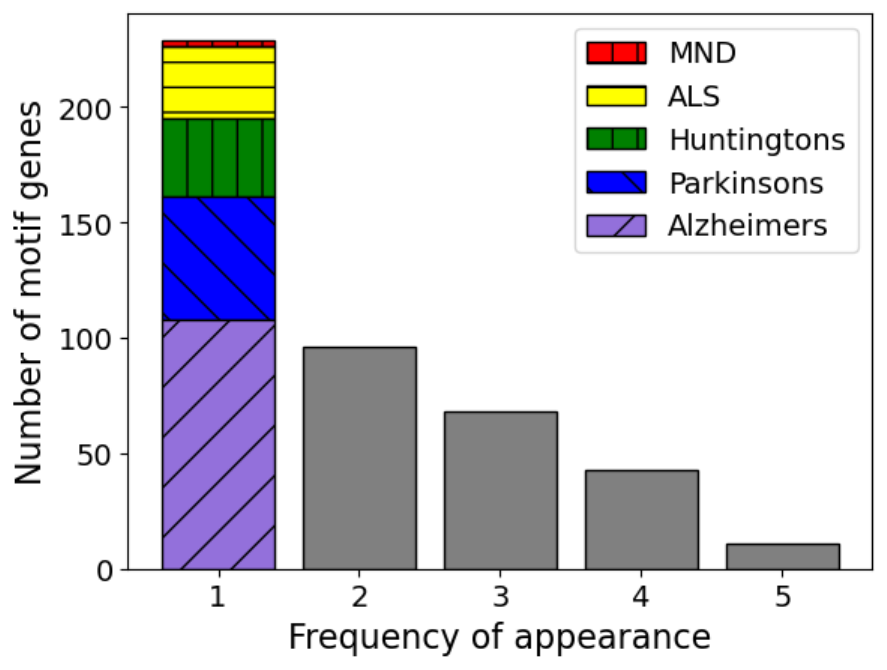}
  \caption{Bifan}
  \label{fig:real:bifan:freq}
\end{subfigure}
\hfill
\begin{subfigure}{.24\textwidth}
  \centering
  \includegraphics[width=1\columnwidth]{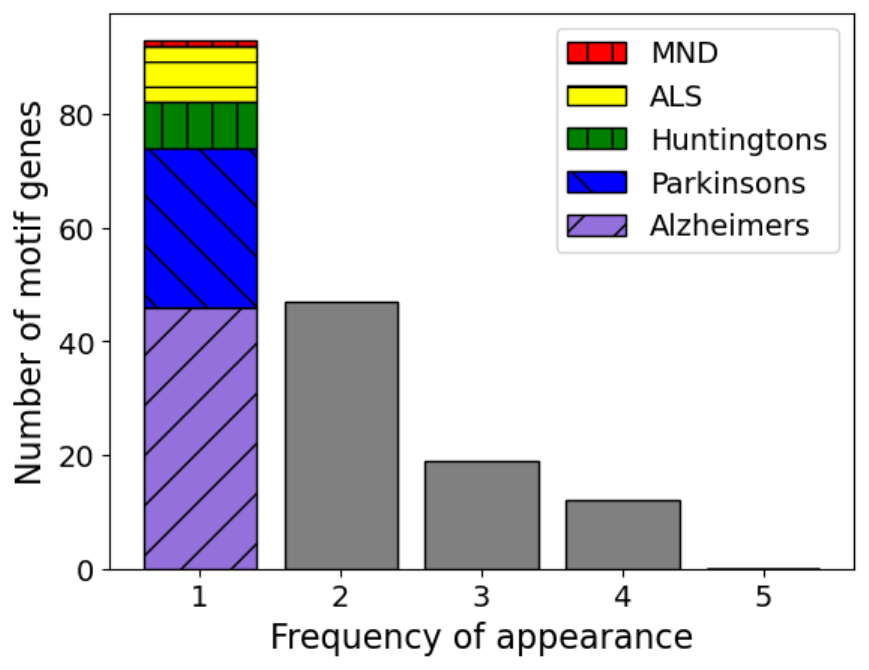}
  \caption{Biparallel}
  \label{fig:real:biparallel:freq}
\end{subfigure}
    \caption{The statistics on the number of motif genes with different appearance frequency in five diseases. The first column represents the number of genes exclusively linked to one of five diseases}
    \label{fig:real:unique genes}
\end{figure}

\noindent{\bf Statistical significance.}
Here, we examine whether the number of motif embeddings found in real datasets is significant, using the concept of z-score. In order to set up the experiment, given a real target network $G$ with the number of motif embeddings found $m_G$, we generate $N$ same-size random graphs $G_1, \dots, G_N$ by randomly shuffling the edge set of $G$. Then, we calculate the mean, denoted as $\mu$, and the standard deviation, denoted as $\sigma$, of the number of motif embeddings of N random graphs. Finally, we calculate the z-score as $Z = \frac{m_G-\mu}{\sigma}$. A motif pattern in a real network $G$ is over-represented or under-represented if the z-score is greater than $2$ or less than $-2$ respectively. 

Table~\ref{tab:statistical_significance} shows the z-scores of four motifs patterns in real networks corresponding to five diseases. We observe that the FFL and bifan patterns are over-represented in 4 and 5 over five disease-related networks respectively with high z-scores. Specifically, the z-scores of the FFL and bifan in over-representation cases are from 2 to 7.29 times higher than the threshold for being considered over-represented. It indicates that the presence of the FFL and bifan patterns in disease-related networks is statistically significant. In contrast, with the cascade and biparallel patterns, we find out that in all cases, their z-scores do not exceed 2 or fall below -2. It means that the occurrence of these two motif patterns are not significant in disease-related networks. From these results, we can infer that motif patterns, in which nodes are associated to either activation or repression, appear more frequently than expected in human regulatory networks related to diseases.

\begin{wraptable}{r}{8.5cm}
\vspace*{-2cm}
\centering
\begin{tabular}{|l||r|r|r|r|r|}
\hline
 & AD & PD & HD & ALS & MND \\ \hline \hline
Cascade & -0.26 & -0.08 & 1.62 & 0.50 & -0.05 \\ \hline
FFL & 6.93 & 8.81 & 5.90 & 5.02 & -0.01 \\ \hline
Bifan & 14.58 & 14.04 & 14.63 & 11.15 & 4.01 \\ \hline
Biparallel & -0.44 & 0.92 & -0.21 & -1.11 & -0.88 \\ \hline
\end{tabular}
\caption{The z-scores that represent statistical significance of the motif count of 4 motif patterns in 5 real regulatory networks associated with neurodegenerative diseases (AD = Alzheimer's, PD = Parkinson's, HD = Huntington Disease).}
    \vspace*{-0.5cm}
\label{tab:statistical_significance}
\end{wraptable}

\section{Conclusion}
Network motif identification problem is a significant problem in the field of biology, especially when we incorporate the $\mathcal{F}_3$ measure for the target network and introduce regulatory constraints to the motif pattern. However, the limited computational capacity of classical computers become a hindrance to the scalability of traditional methods to this problem. In this work, using quantum computing scheme, we propose a novel quantum solution, named \OurAlgo, for the MI problem. We implement and test the performance of \OurAlgo on the IBM's quantum gate-based machine. Although quantum computing is still in the early states of development, the experimental results on both synthetic and real datasets show that \OurAlgo can efficiently identify motifs within reasonable running times. In terms of motif count, \OurAlgo even outperforms the baseline method in almost all cases. This suggests that quantum computing is a promising approach in solving complex biological problems in the future.



%
%
%
%
\newpage
\bibliographystyle{splncs04}
\bibliography{bibliography}

\newpage

\setcounter{table}{0}
\setcounter{figure}{0}
\setcounter{section}{0}
\setcounter{theorem}{0}
\setcounter{lemma}{0}
\renewcommand{\thetable}{SM. \arabic{table}}
\renewcommand{\thefigure}{SM. \arabic{figure}}

\onecolumn

\pagebreak
\pagestyle{empty}
\section*{{\Large Supplementary Materials 1}}
\section{SM 1: Correctness of the QOMIC algorithm}

\begin{lemma}
    Consider a network $G = (V,E)$ and a motif pattern $M = (V', E')$. Given two sets of non-overlapping embeddings $\mathcal{W}_1$ and $\mathcal{W}_2$ such that $\mathcal{W}_1 \neq \mathcal{W}_2$, then $\phi(\mathcal{W}_1) \neq \phi(\mathcal{W}_2)$.
\label{theorem:supplementary:unique}
\end{lemma}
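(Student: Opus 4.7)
The plan is to prove the contrapositive: assume $\phi(\mathcal{W}_1) = \phi(\mathcal{W}_2)$ and deduce $\mathcal{W}_1 = \mathcal{W}_2$. The essential observation I would lead with is that non-overlapping (node-disjoint) embeddings are automatically edge-disjoint, because any shared edge would force a shared endpoint. Consequently, within each family $\mathcal{W}_i$, every edge of the common edge set $\mathcal{E} := \phi(\mathcal{W}_1) = \phi(\mathcal{W}_2)$ lies in exactly one embedding, which I will denote $\Lambda_i(e) \in \mathcal{W}_i$. The goal then reduces to showing $\Lambda_1(e) = \Lambda_2(e)$ for every $e \in \mathcal{E}$.

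Fix an arbitrary edge $e=(u,v) \in \mathcal{E}$. Both $\Lambda_1(e)$ and $\Lambda_2(e)$ contain $u$ and $v$ as nodes of their induced subgraphs, so any edge of $\mathcal{E}$ incident to $u$ or $v$ must simultaneously belong to $\Lambda_1(e)$ (node-disjointness of $\mathcal{W}_1$ implies $u,v$ appear in no other member of $\mathcal{W}_1$) and to $\Lambda_2(e)$ (the same argument inside $\mathcal{W}_2$). This establishes local agreement around the endpoints of $e$; the next step propagates it to the whole embedding.

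To propagate, I would invoke the standing assumption that the motif pattern $M$ is connected. Since $G[\Lambda_1(e)] \equiv M$, the edges of $\Lambda_1(e)$ form a connected subgraph, so for any target edge $e' \in \Lambda_1(e)$ there is a walk $e = e_0, e_1, \dots, e_k = e'$ inside $\Lambda_1(e)$ in which consecutive edges share a vertex. I would then argue by induction on $i$ that $e_i \in \Lambda_2(e)$: assuming $e_{i-1} \in \Lambda_2(e)$, the shared vertex $w$ of $e_{i-1}$ and $e_i$ lies in the node set of $G[\Lambda_2(e)]$, so by node-disjointness of $\mathcal{W}_2$ every edge of $\mathcal{E}$ incident to $w$ — in particular $e_i$ — must belong to $\Lambda_2(e)$. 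This gives $\Lambda_1(e) \subseteq \Lambda_2(e)$, and the symmetric walk inside $\Lambda_2(e)$ yields equality. Since every $\Lambda \in \mathcal{W}_1$ equals $\Lambda_1(e) = \Lambda_2(e) \in \mathcal{W}_2$ for any $e \in \Lambda$, we obtain $\mathcal{W}_1 \subseteq \mathcal{W}_2$, and symmetry closes the argument.

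The main obstacle I anticipate is executing the propagation step cleanly: it requires simultaneously exploiting uniqueness from node-disjointness in two different families, together with the connectedness of $M$, to push local agreement at the endpoints of a single edge out to agreement on the entire embedding. If $M$ were allowed to be disconnected, the argument would break down at this step, which is exactly where the connectedness hypothesis in the definition of a motif pattern is consumed.
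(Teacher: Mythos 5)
Your proof is correct and rests on the same two ingredients as the paper's: node-disjointness forces each edge of the common edge decomposition into a unique embedding within each family, and the connectedness of $M$ propagates local agreement to the whole embedding. The paper packages this as a direct contradiction (an embedding in $\mathcal{W}_1\setminus\mathcal{W}_2$ would have its edges split across two members of $\mathcal{W}_2$, yielding two adjacent edges in different members), whereas you phrase it as a contrapositive with an explicit walk-induction, but the underlying argument is the same.
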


\begin{proof}
We prove this lemma by contradiction. We assume that there exists two different sets of non-overlapping embeddings $\mathcal{W}_1$ and $\mathcal{W}_2$ such that $\phi(\mathcal{W}_1) = \phi(\mathcal{W}_2)$. Without loss of generality, we denote an embedding $\Lambda = \{e_1,\dots,e_{|V'|}\}$ such that $\Lambda \in \mathcal{W}_1$ and $\Lambda \notin \mathcal{W}_2$. Recall that we have $\phi(\mathcal{W}_1) = \phi(\mathcal{W}_2)$, so it follows that $e_1,\dots,e_{|V'|}$ must belong to at least two different embeddings of $\mathcal{W}_2$ \textbf{(1)}.

Because we assume that the motif $M$ is connected and have at least 3 nodes, the embedding $\Lambda$ is itself a connected component with at least 3 nodes \textbf{(2)}.

From \textbf{(1)} and \textbf{(2)}, we can infer that there exists two edges in $\Lambda$ that share a same node and belong to two different embeddings in $\mathcal{W}_2$. This violates the assumption of $\mathcal{W}_2$ which is supposed to contain all non-overlapping embeddings. Thus, given two sets of non-overlapping embeddings $\mathcal{W}_1$ and $\mathcal{W}_2$ such that $\mathcal{W}_1 \neq \mathcal{W}_2$, we can prove that $\phi(\mathcal{W}_1) \neq \phi(\mathcal{W}_2)$.
\end{proof}

\begin{lemma}
Consider a network $G = (V,E)$ and a motif pattern $M = (V', E')$. Given an arbitrary edge set $\mathcal{E} = \{e|e \in E\}$, we show that $\mathcal{E}$ is a unique edge decomposition of a non-overlapping embedding set $\mathcal{W}$ of $M$ into $G$ if it has properties as follows:
\begin{itemize}
    \item \textbf{Property 1:} For every $e \in \mathcal{E}$, there exist a set of $|E'|-1$ distinct edges $S_e = \{e_1, \dots, e_{|E'|-1} \in \mathcal{E}\}$ such that $G[\{e\} \cup S_e] \equiv M$.

    \item \textbf{Property 2:} For every $e_1, e_2 \in \mathcal{E}$ such that $e_1$ and $e_2$ share a same node, then $e_1 \in S_{e_2}$ and $e_2 \in S_{e_1}$.
\end{itemize}
\label{theorem:supplementary:properties}
\end{lemma}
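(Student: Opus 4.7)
My plan is to construct an explicit non-overlapping embedding set $\mathcal{W}$ whose edge decomposition equals $\mathcal{E}$, and then invoke Lemma~\ref{theorem:supplementary:unique} to obtain uniqueness automatically.

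For the construction step, for each $e \in \mathcal{E}$ Property 1 supplies a set $S_e \subseteq \mathcal{E}$ of size $|E'|-1$ such that $W_e := \{e\} \cup S_e$ induces a subgraph isomorphic to $M$. I would let $\mathcal{W} := \{W_e : e \in \mathcal{E}\}$, with duplicates identified so $\mathcal{W}$ is a set of distinct subsets. Since every $e$ lies in $W_e$ and each $W_e \subseteq \mathcal{E}$ (Property 1 forces $S_e \subseteq \mathcal{E}$), the coverage identity $\phi(\mathcal{W}) = \mathcal{E}$ is immediate.

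The crux is verifying that $\mathcal{W}$ is non-overlapping, i.e.\ any two distinct members share no vertex. I would reduce this to the sub-claim: if $f \in W_e$ then $W_f = W_e$. Granted the sub-claim, suppose $W_{e_1}$ and $W_{e_2}$ share a vertex $v$; pick edges $f_i \in W_{e_i}$ incident to $v$ for $i=1,2$. Then $f_1, f_2 \in \mathcal{E}$ share $v$, so Property 2 yields $f_2 \in S_{f_1} \subseteq W_{f_1}$ and $f_1 \in S_{f_2} \subseteq W_{f_2}$. Chaining three applications of the sub-claim gives $W_{e_1} = W_{f_1} = W_{f_2} = W_{e_2}$, as desired.

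The sub-claim is the technical heart and I expect to be the main obstacle. My plan is a line-graph BFS starting from $f \in W_e$: the edges of $W_e$ adjacent to $f$ lie in $\mathcal{E}$ and share an endpoint with $f$, so Property 2 forces them into $S_f \subseteq W_f$. Iterating outward through the connected structure of $G[W_e]$ (which is connected because $M$ is), and using $|W_e| = |W_f| = |E'|$ to rule out strict inclusion, should give $W_e = W_f$. The subtle point is ensuring that at each BFS step the newly exposed edges of $W_e$ are forced into $W_f$ specifically rather than merely into some unrelated $W_g$; this is exactly where the symmetric conclusion of Property 2 enters, letting us match the BFS tree in $G[W_e]$ step-by-step against the corresponding tree in $G[W_f]$. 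Once the sub-claim and non-overlap are in place, Lemma~\ref{theorem:supplementary:unique} applied to any alternative non-overlapping embedding set with decomposition $\mathcal{E}$ forces it to coincide with $\mathcal{W}$, completing the proof.
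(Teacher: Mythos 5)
Your overall architecture (build $\mathcal{W}$ from the sets $W_e=\{e\}\cup S_e$, prove the sub-claim that $f\in W_e$ implies $W_f=W_e$, reduce non-overlap to three applications of the sub-claim, and then get uniqueness for free from Lemma~\ref{theorem:supplementary:unique}) is reasonable, and the reduction of non-overlap to the sub-claim is correct. The gap is the sub-claim itself: it does not follow from Property~1 read as mere existence of $S_e$ together with Property~2, and the BFS argument you sketch is circular at its second level. When you expose an edge $g$ of $W_e$ whose BFS predecessor is $h$, Property~2 only places $g$ in $S_h\subseteq W_h$; to conclude $g\in W_f$ you need $W_h=W_f$, which is exactly an instance of the sub-claim you are in the middle of proving. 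This cannot be repaired under the existence-only reading, because the sub-claim is then false. Counterexample: let $M$ be the path on $4$ nodes (so $|E'|=3$) and let $\mathcal{E}=\{e_1,\dots,e_6\}$ be the consecutive edges of a path $v_1v_2\cdots v_7$, with $S_{e_1}=\{e_2,e_3\}$, $S_{e_2}=\{e_1,e_3\}$, $S_{e_3}=\{e_2,e_4\}$, $S_{e_4}=\{e_3,e_5\}$, $S_{e_5}=\{e_4,e_6\}$, $S_{e_6}=\{e_4,e_5\}$. Every $G[\{e_i\}\cup S_{e_i}]$ is isomorphic to $M$ and every adjacent pair of edges lies in each other's $S$-sets, so Properties~1 and~2 hold; yet $e_3\in W_{e_1}=\{e_1,e_2,e_3\}$ while $W_{e_3}=\{e_2,e_3,e_4\}\neq W_{e_1}$, the $W_e$ are pairwise overlapping, and $\mathcal{E}$ admits no non-overlapping decomposition at all (the two candidate groups share the node $v_4$).

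The missing ingredient is uniqueness of $S_e$. The paper's own proof reads Property~1 as asserting that $S_e$ is the \emph{only} subset of $\mathcal{E}$ with $G[\{e\}\cup S_e]\equiv M$ (this is also what the integer model encodes: Constraint~(\ref{constraint:edge limit}) forces each selected edge to lie in exactly one selected embedding), and it is precisely this uniqueness that rules out the configuration above. Under that reading your sub-claim is immediate and needs no BFS: if $f\in W_e$, then $W_e\setminus\{f\}$ is a set of $|E'|-1$ edges of $\mathcal{E}$ whose union with $\{f\}$ induces a copy of $M$, so by uniqueness $S_f=W_e\setminus\{f\}$ and hence $W_f=W_e$. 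You should either strengthen your reading of Property~1 to uniqueness and replace the BFS with this one-line argument (after which the rest of your proof, including the appeal to Lemma~\ref{theorem:supplementary:unique} for uniqueness of $\mathcal{W}$, goes through and is in fact cleaner than the paper's greedy grouping), or accept that the statement is false as you have interpreted it.
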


\begin{proof}
$\mathcal{E}$ is a unique edge decomposition of a non-overlapping embedding set $\mathcal{W}$ of $M$ into $G$ if $\mathcal{E}$ satisfies two following conditions:
\begin{itemize}
    \item \textbf{Condition 1:} There exists a unique way to completely assign every edge in $\mathcal{E}$ into distinct groups $\Lambda_1, \dots, \Lambda_m$ with $m = \frac{|\mathcal{E}|}{|E'|}$ such that $G[\Lambda_i] \equiv M$ with $i = 1, \dots ,m$.

    \item \textbf{Condition 2: } $\Lambda_1, \dots, \Lambda_m$ are pairwisely non-overlapping.
\end{itemize}
We will prove that an edge set $\mathcal{E}$ with Property 1 and 2 can satisfy Condition 1 and 2 above.

With Property 1, we can establish a method for assigning edges in $\mathcal{E}$ to $m$ distinct groups, as outlined in the first condition. To begin, we select an arbitrary edge, denoted as $e^{(1)} \in \mathcal{E}$. Then, we select a edge set $S_{e^{(1)}} = \{e^{(1)}_1, \dots, e^{(1)}_{|E'|-1}\}$ such that $G[\{e^{(1)}\} \cup S_{e^{(1)}}] \equiv M$. The existence of $S_{e^{(1)}}$ is guaranteed by Property 1. Then, we assign $\{e^{(1)}\} \cup S_{e^{(1)}}$ as the first group $\Lambda_1$. 

Moving forward, we select another arbitrary edge $e^{(2)} \in \mathcal{E} \setminus \Lambda_1$. Similarly, we choose a edge set $S_{e^{(2)}} = \{e^{(2)}_1, \dots, e^{(2)}_{|E'|-1}\}$ such that $G[\{e^{(2)}\} \cup S_{e^{(2)}}] \equiv M$. We assign $\{e^{(2)}\} \cup S_{e^{(2)}}$ as the second group $\Lambda_2$. Because of Property 1, we can prove that $\Lambda_1$ and $\Lambda_2$ are distinct. In other words, we demonstrate that $e^{(2)}_1, \dots, e^{(2)}_{|E'|-1} \in \mathcal{E} \setminus \Lambda_1$. By contradiction, we assume that there exists an edge $e' \in S_{e^{(2)}}$ such that $e' \in \Lambda_1$. As a result, for $e'$, there exists two different $S_{e'}$ such that $G[\{e'\} \cup S_{e'}] \equiv M$ that contradicts to Property 1. Thus, $\Lambda_1$ and $\Lambda_2$ are distinct.

By following a similar approach, given $i-1$ groups, we can form the i\emph{th} group $\Lambda_i = \{e^{(i)}\} \cup S_{e^{(i)}}$. Here, $e_i$ is chosen such that $e_i \in \mathcal{E} \setminus \cup_{j = 1}^{i-1} \Lambda_j$ while $S_{e^{(i)}} = \{e_1^{(i)}, \dots, e_{|E'|-1}^{(i)}\}$ satisfies $G[\{e^{(i)}\} \cup S_{e^{(i)}}] \equiv M$. Group $\Lambda_i$ is distinct with $i-1$ previous groups. In the end, we can construct $m$ groups that are the embeddings of $M$ into $G$ and pairwisely distinct \textbf{(1a)}.

Next, we show that the set of $m$ groups $\mathcal{W} = \{\Lambda_1, \dots, \Lambda_m\}$ constructed as above are unique. By contradiction, we assume that there exists a different set of $m$ distinct groups $\mathcal{W'} = \{\Lambda'_1, \dots, \Lambda'_m\}$ such that $\cup_{i=1}^m \Lambda'_i = \mathcal{E}$ and $G[\Lambda_i] \equiv \mathcal{E}$ for $i = 1,\dots, m$. Additionally, because $\mathcal{W} \neq \mathcal{W'}$, there exists at least one group $\Bar{\Lambda} \in \mathcal{W'}$ such that $\Bar{\Lambda} \notin \mathcal{W}$. Given $\Bar{e} \in \Bar{\Lambda}$, because $\Bar{e} \in \mathcal{E} = \cup_{\Lambda \in \mathcal{W}} \Lambda$, there exists a group $\Lambda \in \mathcal{W}$ such that $\Bar{e} \in \Lambda$. That contradicts to Property 1. Thus, the set of $m$ groups $\mathcal{W} = \{\Lambda_1, \dots, \Lambda_m\}$ is unique \textbf{(1b)}.

From \textbf{(1a)} and \textbf{(1b)}, we prove that if the edge set $\mathcal{E}$ has Property 1, it can satisfy Condition 1 \textbf{(1)}.

On the other hand, Property 2 implies that there is no two groups that share a same node. Thus, Condition 2 holds \textbf{(2)}.

From \textbf{(1)} and \textbf{(2)}, we prove the correctness of this lemma.
\end{proof}

\begin{theorem}
    An assignment of $\mathbf{X}$ which maximizes the number of edges and satisfies three Constraints~(\ref{constraint:edge limit}),~(\ref{constraint:node limit}) and~(\ref{constraint:no edge}) results in the optimal solution for the MI problem.
    \label{theorem:supplementary:IP correctness}
\end{theorem}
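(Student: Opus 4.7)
The plan is to reduce the integer model to the alternative formulation of the MI problem given in Definition~\ref{definition:alternative}, which by Lemma~\ref{theorem:def:properties} asks for the largest edge subset $\mathcal{E}\subseteq E$ satisfying Properties~1 and~2. The key device is a bijection between feasible assignments $\mathbf{X}$ of the integer model and such edge subsets, under which the objective $\sum_{(i,j)\in E} x_{i,j}$ equals $|\mathcal{E}|$. Because each embedding in a non-overlapping set $\mathcal{W}$ contributes exactly $|E'|$ edges to $\phi(\mathcal{W})$ (non-overlapping embeddings share no nodes, hence no edges), maximizing $|\mathcal{E}|$ is equivalent to maximizing $|\mathcal{W}|$, which is the MI objective per Definition~\ref{definition:formal}. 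Establishing the bijection and the objective correspondence therefore closes the proof in one shot.

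The forward direction of the bijection dominates the work. Given a feasible $\mathbf{X}$, I would set $\mathcal{E}(\mathbf{X}) = \{(i,j)\in E : x_{i,j}=1\}$, which is contained in $E$ thanks to Constraint~(\ref{constraint:no edge}). For Property~1, I would pick any selected edge $(i,j)$ and note that Constraint~(\ref{constraint:edge limit}) forces $h_{V,i,j}=1$; unpacking the definition of $h_{V,i,j}$ as a sum of nonnegative indicator products, exactly one permutation $[\pi_1,\ldots,\pi_{n'}]\in\mathcal{M}^{(n')}_{V,i,j}$ produces a fully selected, label-matching edge set, and that set serves as the witness $S_{(i,j)}$. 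For Property~2, I would take two selected edges $(i,j),(k,t)$ that share a node and combine Constraint~(\ref{constraint:node limit}) with $x_{i,j}x_{k,t}=1$ to force both $h_{V\setminus\{k,t\},i,j}$ and $h_{V\setminus\{i,j\},k,t}$ to vanish; I would then argue that the unique embedding guaranteed for each edge by Property~1 must use the endpoints of the other edge, so the two witnessing sets coincide, giving $(k,t)\in S_{(i,j)}$ and $(i,j)\in S_{(k,t)}$. The reverse direction is routine: starting from $\mathcal{E}$ with Properties~1 and~2, define $x_{i,j}$ as the indicator of membership and verify each constraint by reading the property back into the algebraic identity.

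The main obstacle I anticipate lies in the forward argument for Property~2, because Constraint~(\ref{constraint:node limit}) is written as a sum of two nonnegative $h$-terms rather than as a direct combinatorial statement, so pulling out a ``shared embedding'' conclusion requires a careful case split on which endpoint is shared and on which nodes the embedding may use. I would need to rule out that some permutation through the shared node produces a second valid embedding containing $(i,j)$ but not $(k,t)$, since any such permutation would either reintroduce a forbidden summand inside one of the restricted $h$-terms or create a second witness for $(i,j)$, contradicting $h_{V,i,j}=1$. A related subtlety, worth a short lemma of its own, is to handle motifs $M$ with nontrivial automorphisms: distinct permutations in $\mathcal{M}^{(n')}_{V,i,j}$ may collapse to the same edge set, so I would have to fix a consistent counting convention (e.g.\ restrict to a canonical representative per automorphism class) so that $h_{V,i,j}=1$ continues to encode ``exactly one embedding'' rather than ``exactly one permutation.''
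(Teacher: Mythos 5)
Your proposal takes essentially the same route as the paper's proof: both argue that Constraints~(\ref{constraint:edge limit})--(\ref{constraint:no edge}) force the set of selected edges to satisfy Properties~1 and~2 of Lemma~\ref{theorem:def:properties}, and then invoke that lemma to identify feasible assignments with edge decompositions of non-overlapping embedding sets, so that maximizing $\sum_{(i,j)\in E} x_{i,j}$ maximizes the number of embeddings. The one point where you go beyond the paper is the automorphism caveat: the paper simply asserts that $h_{V,i,j}$ equals the number of embeddings containing $(i,j)$, which holds only when $M$ has no nontrivial automorphism fixing nodes $1$ and $2$, so your proposed canonical-representative convention is a legitimate tightening rather than an unnecessary detour.
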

\begin{proof}
 We recall that given $i, j \in V$, each sequence $[\pi_1 = i, \pi_2 = j, \pi_3, \dots, \pi_k]$ corresponds to a distinct edge set $S = \{(\pi_{i'}, \pi_{j'})| (i',j') \in E'\}$. Thus, $$\prod_{(i',j')\in E'} x_{\pi_{i'},\pi_{j'}}c_{\pi_{i'}, \pi_{j'},i',j'} = 1$$ if all edges in the set $S$ are selected and $G[S] \equiv M$. As a result, given the edge $(i,j) \in E$, the sum $$h_{V,i,j} = \sum_{[\pi_1,\dots,\pi_{n'}] \in \ \mathcal{M}^{(n')}_{V,i,j}} \prod_{(i',j')\in E'} x_{\pi_{i'},\pi_{j'}}c_{\pi_{i'}, \pi_{j'},i',j'}$$, is equal to the number of motifs including $(i,j)$.

 Constraint~(\ref{constraint:edge limit}) is satisfied if $\forall (i,j) \in E$, $x_{i,j} = h_{V,i,j}$. If the edge $(i,j)$ is not selected with $x_{i,j} = 0$, Constraint~(\ref{constraint:edge limit}) always holds because all products in $h_{V,i,j}$ includes $x_{i,j}$. On the other hand, if the edge $(i,j)$ is selected  with $x_{i,j} = 1$, the number of motifs including $(i,j)$ must be 1. Thus, selected edges that satisfy Constraint~(\ref{constraint:edge limit}) also satisfy Property 1 in Lemma~\ref{theorem:def:properties} $\textbf{(1)}$.

 When it comes to Constraint~(\ref{constraint:node limit}), it is always satisfied if the edge $(i,j) \in E$ or $(k,t) \in E$ is not selected. On the other hand, if there exists two selected edges $(i,j), (k,t) \in E$ which share at least one common node, Constraint~(\ref{constraint:node limit}) is satisfied if $h_{V \setminus \{k,t\},i,j}+h_{V \setminus \{i,j\},k,t} = 0$. In other word, there exists no motif which includes the edge $(i,j)$, but does not include the edge $(k,t)$, and vice versa. Thus, combining with Constraint~(\ref{constraint:node limit}) in which each selected edge must associate with one motif, we can imply that $(i,j)$ and $(k,t)$ must belong to a same motif in order to satisfy two these constraints. Thus, selected edges that satisfy Constraint~(\ref{constraint:edge limit}) and (\ref{constraint:node limit}) also satisfy Property 2 in Lemma~\ref{theorem:def:properties} $\textbf{(2)}$.

 Constraint~(\ref{constraint:no edge}) is to ensure that $x_{i,j} = 0\forall (i,j) \notin E$ $\textbf{(3)}$.

 From $\textbf{(1)}$, $\textbf{(2)}$, $\textbf{(3)}$, and Lemma~\ref{theorem:def:properties}, a feasible assignment $\mathbf{X}$, that satisfies three constraints, corresponds to a valid edge decomposition of a non-overlapping embedding set of $M$ into $G$. Thus, by finding the maximum feasible $\mathbf{X}$, we can obtain the maximum number of non-overlapping motifs.

\end{proof}

\begin{theorem}
    The assignment of $\mathbf{X}$, which minimizes the function $f$, optimally solve the $MI$ problem.
\end{theorem}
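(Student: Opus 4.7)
The plan is to reduce the claim to Theorem~\ref{theorem:IP correctness} via a standard penalty-calibration argument. I would show that each $f_{p_k}$ is non-negative and vanishes precisely on assignments satisfying Constraint $k$, and then verify that the constants $A_1, A_2, A_3$ can be chosen so large that no constraint-violating assignment can be a minimizer of $f$. Once this is done, minimizing $f$ collapses to maximizing $f_c$ over feasible $\mathbf{X}$, to which Theorem~\ref{theorem:IP correctness} directly applies.

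First I would verify that each penalty is non-negative and identify its zero set. Each $f_{p_k}$ is a sum of terms that are either explicit squares (as in $f_{p_1}$), products of binary variables with a squared factor (as in $f_{p_2}$), or non-negative binary values (as in $f_{p_3}$), all scaled by a positive constant $A_k$. Setting the aggregate to zero therefore forces each individual summand to vanish, which translates verbatim into Constraints~(\ref{constraint:edge limit}), (\ref{constraint:node limit}), and (\ref{constraint:no edge}) respectively. This gives the equivalence between feasibility of the integer program and simultaneous vanishing of $f_{p_1}$, $f_{p_2}$, $f_{p_3}$.

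Next I would calibrate the weights. Because each $\gamma$ and $\gamma'$ value lies in an integer set and each $x_{i,j}\in\{0,1\}$, the polynomial $h_{V,i,j}$ always takes integer values; hence any violated penalty clause contributes at least $A_k$ to $f(\mathbf{X})$. Combined with $0\le f_c(\mathbf{X})\le |E|$ and the observation that the trivial assignment $\mathbf{X}=\mathbf{0}$ is feasible with $f(\mathbf{0})=0$, any choice $A_1,A_2,A_3>|E|$ guarantees $f(\mathbf{X})\ge 1 > 0 \ge f(\mathbf{X}^*)$ for every infeasible $\mathbf{X}$, so no minimizer of $f$ violates any constraint. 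On the feasible set, all three penalty terms vanish, hence $f(\mathbf{X})=-f_c(\mathbf{X})$, and a minimizer of $f$ is exactly a maximizer of $f_c$ subject to the three constraints; Theorem~\ref{theorem:IP correctness} then certifies optimality for the MI problem.

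The main obstacle is the calibration step, which requires a uniform positive lower bound on the contribution of each violated clause. This relies on integrality of $h_{V,i,j}$ and of the sums appearing in $f_{p_2}$; the only subtle check is that although the coefficients $c_{i,j,i',j'}$ may be large (they involve the dummy constant $\Omega$ when the relevant edges are absent), they are still integers, so the resulting polynomial values remain integer and the cutoff $A_k>|E|$ is clean. If this integrality failed, the argument would still go through but with a more delicate threshold depending on the smallest non-zero value attainable by the relevant polynomials.
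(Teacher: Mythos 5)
Your proposal is correct and follows the same route as the paper's proof: the penalty terms are non-negative and vanish exactly on assignments feasible for Constraints~(\ref{constraint:edge limit})--(\ref{constraint:no edge}), so minimizing $f$ reduces to maximizing $f_c$ over the feasible set, and Theorem~\ref{theorem:IP correctness} finishes the argument. Your explicit calibration step --- using integrality of $h_{V,i,j}$ to show each violated clause costs at least $A_k$, and comparing against the feasible all-zero assignment with $f(\mathbf{0})=0$ to conclude $A_k>|E|$ suffices --- is a detail the paper leaves implicit behind the phrase ``large constant,'' and it strengthens rather than departs from the published proof.
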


\begin{proof}
    From the definition of the function $f$, a feasible $\mathbf{X}$, which satisfies three constraints in the integer model, lead to minimum values of 0 in penalty terms including $f_{p_1}$, $f_{p_2}$, and $f_{p_3}$. Besides, $f_c$ corresponds to the number of selected edges from $\mathbf{X}$ with a negative sign. Consequently, $\mathbf{X}$, which minimizes the function $f$, represents a maximum feasible solution for the integer model. From Theorem~\ref{theorem:IP correctness}, we can conclude that $\mathbf{X}$, which minimizes the function $f$, is the optimal solution for the $MI$ problem.
\end{proof}

\section{SM 2: The enrichment analysis corresponding to five neurodegenerative disorders}


Here, we represent the enrichment analysis of unique motif genes associated with the Alzheimer's disease. Our enrichment analysis is conducted using the GO software~\cite{Ashburner2000,Aleksander2003,Thomas2022}.

Table~\ref{tab:enrichment:Alzheimers} represents the top three molecular functions with the lowest false detection rates (FDR) corresponding to sets of motif genes which are uniquely associated with the Alzheimer's disease. We observe that the FDRs of all molecular functions are small (less than $10^{-6}$), so sets of unique motif genes are strongly relevant to the molecular functions found. Specifically, unique motif genes of three-node patterns, including cascade and FFL, are relevant to DNA-binding functions. On the other hand, unique motif genes of four-node patterns, such as bifan and biparallel, are associated with transcription activities. 

Table~\ref{tab:enrichment:Parkinsons} represents the top three molecular functions with the lowest FDRs corresponding to sets of motif genes associated with Parkinson's disease. Interestingly, the FDR for the cascade motif pattern is significantly higher (less than $10^{-2}$), the FDR for the FFL motif is significantly lower (less than $10^{-10}$), while bifan and biparallel remain in between these extremes (less than $10^{-6}$ and less than $10^{-5}$, respectively). 

For Huntington's disease (Table~\ref{tab:enrichment:Huntingtons}), only one molecular function corresponding to FFL was found, while the other motif patterns had at least three molecular functions. These FDRs are noticeably higher than Alzheimer's and Parkinson's, with all but one being less than $10^{-2}$. Bifan in particular struggled, with all FDRs less than $10^{-1}$.

In case of the ALS disease (Table~\ref{tab:enrichment:ALS}), only the FFL motif pattern has at least 3 molecular functions corresponding to ALS. Bifan only had one and the other motif patterns had none. The FDRs for the top three molecular functions of FFL are less than $10^{-2}$, and the FDR for the sole bifan molecular function is less than $10^{-1}$.

For the MND disease (Table~\ref{tab:enrichment:MND}), only the biparallel motif pattern has corresponding molecular functions in MND. However, its top three FDRs are incredibly low, with all being less than $10^{-5}$.

\begin{table}[t]
\centering
\begin{tabular}{|c|c|l|c|}
\hline
Motif & Term ID & \multicolumn{1}{c|}{Term description} & FDR \\ \hline
\multirow{3}{*}{Cascade} & GO:0140297 & DNA-binding transcription factor binding & 6.19e-09 \\ \cline{2-4} 
 & GO:0061629 & RNA polymerase II-specific DNA-binding transcription factor binding & 3.67e-08 \\ \cline{2-4} 
 & GO:0000978 & RNA polymerase II cis-regulatory region sequence-specific DNA binding & 1.99e-06 \\ \hline
\multirow{3}{*}{FFL} & GO:0140110 & Transcription regulator activity & 2.45e-10 \\ \cline{2-4} 
 & GO:0000978 & RNA polymerase II cis-regulatory region sequence-specific DNA binding & 3.60e-10 \\ \cline{2-4} 
 & GO:0000977 & RNA polymerase II transcription regulatory region sequence-specific DNA binding & 4.78e-10 \\ \hline
\multirow{3}{*}{Bifan} & GO:0000976 & Transcription cis-regulatory region binding & 4.89e-11 \\ \cline{2-4} 
 & GO:1990837 & Sequence-specific double-stranded DNA binding & 4.89e-11 \\ \cline{2-4} 
 & GO:0140110 & Transcription regulator activity & 5.13e-11 \\ \hline
\multicolumn{1}{|l|}{\multirow{3}{*}{Biparallel}} & \multicolumn{1}{l|}{GO:0140110} & Transcription regulator activity & \multicolumn{1}{l|}{2.61e-11} \\ \cline{2-4} 
\multicolumn{1}{|l|}{} & \multicolumn{1}{l|}{GO:0008134} & Transcription factor binding & \multicolumn{1}{l|}{2.72e-11} \\ \cline{2-4} 
\multicolumn{1}{|l|}{} & \multicolumn{1}{l|}{GO:0140297} & DNA-binding transcription factor binding & \multicolumn{1}{l|}{3.64e-10} \\ \hline
\end{tabular}
\caption{The enrichment analysis in term of molecular functions corresponding to motif genes from the Alzheimers-related network
}
\label{tab:enrichment:Alzheimers}
\end{table}

\begin{table}[t]
\centering

\begin{tabular}{|c|l|l|l|}
\hline
Motif & \multicolumn{1}{c|}{Term ID} & \multicolumn{1}{c|}{Term description} & \multicolumn{1}{c|}{FDR} \\ \hline
\multirow{3}{*}{Cascade} & GO:0000977 & RNA polymerase II transcription regulatory region sequence-specific DNA binding & 0.008 \\ \cline{2-4} 
 & GO:0000981 & DNA-binding transcription factor activity, RNA polymerase II-specific & 0.008 \\ \cline{2-4} 
 & GO:0000978 & RNA polymerase II cis-regulatory region sequence-specific DNA binding & 0.0292 \\ \hline
\multirow{3}{*}{FFL} & GO:0000976 & Transcription cis-regulatory region binding & 3.81e-12 \\ \cline{2-4} 
 & GO:0000978 & RNA polymerase II cis-regulatory region sequence-specific DNA binding & 9.63e-12 \\ \cline{2-4} 
 & \multicolumn{1}{c|}{GO:0000977} & RNA polymerase II transcription regulatory region sequence-specific DNA binding & 1.7e-11 \\ \hline
\multirow{3}{*}{Bifan} & GO:0000977 & RNA polymerase II transcription regulatory region sequence-specific DNA binding & 4.6e-07 \\ \cline{2-4} 
 & GO:0000978 & RNA polymerase II cis-regulatory region sequence-specific DNA binding & 4.6e-07 \\ \cline{2-4} 
 & GO:0003690 & Double-stranded DNA binding & 4.6e-07 \\ \hline
\multicolumn{1}{|l|}{\multirow{3}{*}{Biparallel}} & GO:0043565 & Sequence-specific DNA binding & 5.42e-06 \\ \cline{2-4} 
\multicolumn{1}{|l|}{} & GO:0000978 & RNA polymerase II cis-regulatory region sequence-specific DNA binding & 6.28e-06 \\ \cline{2-4} 
\multicolumn{1}{|l|}{} & GO:1990837 & Sequence-specific double-stranded DNA binding & 7.03e-06 \\ \hline
\end{tabular}
\caption{The enrichment analysis in term of molecular functions corresponding to motif genes from the Parkinsons-related network
}
\label{tab:enrichment:Parkinsons}
\end{table}

\begin{table}[t]
\centering

\begin{tabular}{|c|l|l|l|}
\hline
Motif & \multicolumn{1}{c|}{Term ID} & \multicolumn{1}{c|}{Term description} & FDR \\ \hline
\multirow{3}{*}{Cascade} & GO:1990841 & Promoter-specific chromatin binding & 3e-06 \\ \cline{2-4} 
 & GO:0019899 & Enzyme binding & 0.0131 \\ \cline{2-4} 
 & GO:0019901 & Protein kinase binding & 0.0131 \\ \hline
FFL & GO:0005515 & Protein binding & 0.005 \\ \hline
\multirow{3}{*}{Bifan} & GO:0042802 & Identical protein binding & 0.0275 \\ \cline{2-4} 
 & GO:0140297 & DNA-binding transcription factor binding & 0.0419 \\ \cline{2-4} 
 & GO:0005102 & Signaling receptor binding & 0.0459 \\ \hline
\multicolumn{1}{|l|}{\multirow{3}{*}{Biparallel}} & GO:0000978 & RNA polymerase II cis-regulatory region sequence-specific DNA binding & 0.0024 \\ \cline{2-4} 
\multicolumn{1}{|l|}{} & GO:0000981 & DNA-binding transcription factor activity, RNA polymerase II-specific & 0.0024 \\ \cline{2-4} 
\multicolumn{1}{|l|}{} & GO:0140110 & Transcription regulator activity & 0.0024 \\ \hline
\end{tabular}
\caption{The enrichment analysis in term of molecular functions corresponding to motif genes from the Huntingtons-related network
}
\label{tab:enrichment:Huntingtons}
\end{table}

\begin{table}[t]
\centering

\begin{tabular}{|c|l|l|l|}
\hline
Motif & \multicolumn{1}{c|}{Term ID} & \multicolumn{1}{c|}{Term description} & \multicolumn{1}{c|}{FDR} \\ \hline
\multirow{3}{*}{FFL} & GO:0000987 & Cis-regulatory region sequence-specific DNA binding & 0.002 \\ \cline{2-4} 
 & GO:0003682 & Chromatin binding & 0.002 \\ \cline{2-4} 
 & GO:0008134 & Transcription factor binding & 0.002 \\ \hline
Bifan & GO:0005178 & Integrin binding & 0.0263 \\ \hline
\end{tabular}
\caption{The enrichment analysis in term of molecular functions corresponding to motif genes from the ALS-related network
}
\label{tab:enrichment:ALS}
\end{table}

\begin{table}[t]
\centering

\begin{tabular}{|c|l|l|l|}
\hline
Motif & \multicolumn{1}{c|}{Term ID} & \multicolumn{1}{c|}{Term description} & \multicolumn{1}{c|}{FDR} \\ \hline
\multirow{3}{*}{Biparallel} & GO:0046332 & SMAD binding & 3.46e-11 \\ \cline{2-4} 
 & GO:0070411 & I-SMAD binding & 5.33e-10 \\ \cline{2-4} 
 & GO:0005160 & Transforming growth factor beta receptor binding & 1.75e-6 \\ \hline
\end{tabular}
\caption{The enrichment analysis in term of molecular functions corresponding to motif genes from the MND-related network
}
\label{tab:enrichment:MND}
\end{table}

\end{document}